\newcommand{\A}{\mathbb{A} }
\newcommand{\EE}{\mathbb{E} }
\newcommand{\FF}{\mathbb{F} }
\newcommand{\PP}{\mathbb{P} }
\newcommand{\RR}{\mathbb{R} }
\newcommand{\cF}{\mathcal{F} }
\newcommand{\cH}{\mathcal{H} }
\def\balpha{\boldsymbol{\alpha}}
\def\bbeta{\boldsymbol{\beta}}
\def\bX{{\mathbf X}}
\def\bY{{\mathbf Y}}
\def\bZ{{\mathbf Z}}
\def\o{\overline}
\def\t{\tilde}
\newcommand{\eps}{\varepsilon}
\newcommand{\be}{\begin{equation}}
\newcommand{\en}{\end{equation}}
\newtheorem{theorem}{Theorem}
\newtheorem{prop}{Proposition}
\newtheorem{lemma}{Lemma}
\newtheorem{remark}{Remark}
\newcommand{\ea}{\end{eqnarray}}
\newcommand{\ba}{\begin{eqnarray}}
\newcommand{\ean}{\end{eqnarray*}}
\newcommand{\ban}{\begin{eqnarray*}}
\title{SYSTEMIC RISK AND STOCHASTIC GAMES WITH DELAY }
\author{Ren\'e Carmona\thanks{ORFE, Bendheim Center for Finance, Princeton University, Princeton, NJ 08544. {\em
rcarmona@princeton.edu}. }
\and Jean-Pierre Fouque\thanks{Department of Statistics \& Applied Probability,
 University of California,
        Santa Barbara, CA 93106-3110, {\em fouque@pstat.ucsb.edu}. Work  supported by NSF grant DMS-1409434.}
        \and Seyyed Mostafa Mousavi\thanks{Department of Statistics \& Applied Probability,
 University of California,
        Santa Barbara, CA 93106-3110, {\em mousavi@pstat.ucsb.edu}}
        \and Li-Hsien Sun\thanks{Institute of Statistics, National Central University, Chung-Li, Taiwan 32001, {\em lihsiensun@ncu.edu.tw}.
        Work supported by MOST-103-2118-M-008-006-MY2.}}
\date{\today}
\begin{document}

\maketitle

\begin{abstract}
We propose a  model of inter-bank lending and borrowing which takes into account clearing debt obligations. The evolution of log-monetary reserves of $N$ banks is described by coupled diffusions driven by controls with delay in their drifts. Banks are minimizing their finite-horizon objective functions which take into account a quadratic cost for lending or borrowing and a linear incentive to borrow if the reserve is low or lend if the reserve is high relative to the average capitalization of the system. As such, our problem is an $N$-player linear-quadratic stochastic differential game with delay.   An open-loop Nash equilibrium is obtained using a system of fully coupled forward and advanced backward stochastic differential equations. We then describe how the delay affects liquidity and systemic risk characterized by a large number of defaults. We also derive a close-loop Nash equilibrium using an HJB approach.

\end{abstract}

\textbf{Keywords:} Systemic risk, inter-bank borrowing and lending, stochastic game with delay, Nash equilibrium.

\section{Introduction}\label{Intro}

In  \cite{R.Carmona2013}, we 
proposed a   stochastic game model of inter-bank lending and borrowing  where banks  borrow from or lend to a central bank with no obligation to pay back their loans and no gain from lending. The main finding was that in equilibrium, the central bank is acting as a clearing house,  liquidity is created, thus leading to a more stable system. Systemic risk was analyzed as in \cite{Fouque-Sun} in the case of a linear model without control. Systemic risk being characterized as the rare event of a large number of defaults occurring when the average capitalization reaches a prescribed  level, the conclusion was that inter-bank lending and borrowing leads to stability through a flocking effect.
For this type of interaction without control, we also refer to \cite{Fouque-Ichiba,  Garnier-Mean-Field} and \cite{GarnierPapanicolaouYang}.

In order to make the toy model of \cite{R.Carmona2013} more realistic, we introduce delay in the controls.  This forces banks to take responsibility for past lending and borrowing. In this paper, the evolution of the log-monetary reserves of $N$ banks is described by a system of delayed stochastic differential equations, and banks try to minimize their costs or maximize their profits by controlling the rate of borrowing or lending. They interact via the average capitalization meaning that banks consider  this average as a critical level to determine borrowing from or lending to the central bank. 

We identify open-loop Nash equilibria  by solving fully coupled forward and \emph{advanced} backward stochastic differential equations (FABSDEs) introduced by  \cite{Peng2009}. Our conclusion is that the  new effect created by the need to \emph{pay back} or \emph{receive refunds} due to the presence of the delay in the controls, reduces the liquidity observed in the case without delay. However, despite these quantitative differences,  the central bank is still acting as a clearing house.  A closed-loop Nash equilibrium to this stochastic game with delay is derived from the Hamilton-Jacobi-Bellman (HJB) equation approach   using the results in \cite{F.Gozzi2009} and we provide a verification Theorem.


For a general introduction to BSDEs, stochastic control and stochastic differential games without delay, we refer to the recent monograph \cite{CarmonaSIAM2016}.
Stochastic control problems with delay have been studied from various points of view. When the delay only appears in the state variable, solutions to delayed optimal control problems were derived from variants of the Pontryagin-Bismut-Bensoussan stochastic maximum principle. See for instance  \cite{B.Oksendal2001} and \cite{ B.Oksendal2011}. Alternatively, in order to use dynamic programming, \cite{Larssen2002} and \cite {B.Larssen2003} reduce the system with delay to a finite-dimension problem, but still the delay does not appear in the control like in the case we want to study.

The general case of stochastic optimal control of stochastic differential equations with delay both in the state and the control is studied using an infinite-dimensional  HJB equation in \cite{Gozzi2006}, and 
\cite{F.Gozzi2009}. The case with pointwise delayed control  is studied in   \cite{F.Gozzi2015}.  The general stochastic control problem in the case of delayed states and controls both appearing in the forward equation  is studied in \cite{L.Chen2011}, \cite{L.Chen2012} and  \cite{Xu2013} by using the forward and advanced backward stochastic equations. Linear-Quadratic mean field Stackelberg games with delay and with a major player and many small players are studied in \cite{P.Yam2016}.

The typical problem studied in this paper can be described as follows. The dynamics of the log-monetary reserves of $N$ banks are given by the following coupled diffusion processes $X^i_t$, $i=1,\cdots,N$,
\ba\label{Xi-1}
dX^{i}_t &=&\left(\alpha^{i}_t-\alpha^i_{t-\tau}\right)dt+\sigma d{W}^{i}_t,\quad 0\leq t\leq T,
\ea
where ${W}^i_t$, $i=1,\cdots,N$ are independent standard Brownian motions, and the rate of borrowing or lending $\alpha^{i}_t$ represents the control exerted by bank $i$ on the system. In this example, we use the simplest possible form of delay, the delayed control $\alpha^i_{t-\tau}$ corresponding to repayments after a fixed time $\tau$ such that $0\leq \tau\leq T$. We shall use deterministic initial conditions given by 
\be
X^i_0=\xi^i, 
\qquad\text{and}\qquad
\alpha^i_t=0,\quad t\in[-\tau,0).
\en
For simplicity, we assume that the banks have the same volatility $\sigma >0$. In what follows we use the notations $X=(X^1,\cdots,X^N)$, $x=(x^1,\cdots,x^N)$, $\alpha=(\alpha^1,\cdots,\alpha^N)$, and $\overline{x}=\frac{1}{N}\sum_{i=1}^{N}x^{i}$.

 
Before concentrating on the specific case \eqref{Xi-1}, we prove a dedicated version of the sufficient condition of the Pontryagin stochastic maximum principle for a more general class of models for which the dynamics of the states are given by stochastic differential equations of the form:
\ba\label{Xi-1-general}
dX^{i}_t &=&\left(\int_0^\tau\alpha^i_{t-s}\theta(ds)\right)dt+\sigma d{W}^{i}_t,\quad 0\leq t\leq T,
\ea 
where $\theta$ is a nonnegative measure on $[0,\tau]$. The special case (\ref{Xi-1}) corresponds to $\theta=\delta_0-\delta_\tau$.

Bank $i$ chooses its own strategy $\alpha^i$ in order to minimize its objective function of the form:
\ba\label{Ji-1}
J^{i}(\alpha)&=&\EE\bigg\{\int_{0}^{T}f_i(X_t,\alpha^i_t)
dt+g_i(X_T)\bigg\}.
\ea
In this paper, we concentrate on the running and terminal cost functions used in \cite{R.Carmona2013}, namely:
\be\label{fi}
f_i(x,\alpha^i)
=\frac{(\alpha^i)^2}{2}-q\alpha^i(\overline{x}-x^i)
+\frac{\epsilon}{2}(\overline{x}-x^i)^2,\quad q\geq 0,\quad \epsilon> 0,
\en
and
\be\label{gi}
g_i(x)=\frac{c}{2}\left(\overline{x}-x^{i}\right)^2,\quad c\geq 0,
\en
with $q^2< \epsilon$ so that $f_i(x,\alpha)$ is convex in $(x,\alpha)$. 
Note that the case $\tau>T$ corresponds to no repayment and therefore no delay in the equations,. The case $\tau=0$ corresponds to the case with no control and therefore no lending or borrowing. The term $q\alpha^i(\overline{x}-x^i)$ in the objective function (\ref{fi}) is an incentive to lend or borrow from a central bank which in this model does not make any decision and simply provides liquidity. However,  we know that in the case with no delay (\cite{R.Carmona2013}), in equilibrium, the central bank acts as a clearing house. We will see in Section \ref{FI} that this is still the case with delay.

\vskip 4pt
The paper is organized as follows. In Section \ref{MFG-Systemic-Risk}, we briefly review the model without delay presented in \cite{R.Carmona2013}. 
The analysis of the stochastic differential games with delay is presented in Section \ref{Systemic-Risk-SDDE}
where we derive an exact open-loop Nash equilibrium using the FABSDE approach. In the process, we derive the \emph{clearing house} role of the central bank in Remark \ref{re:clearing}. 
Section \ref{sec:HJB} is devoted to the derivation of a closed-loop equilibrium using an infinite-dimensional  HJB equation approach with pointwise delayed control presented in \cite{F.Gozzi2015}.
In Section \ref{caratheodory}, we provide a verification Theorem. The effect of delay in term of  financial implication is discussed in Section \ref{FI} where the main finding is that the introduction of delay in the model does not change the fact that in equilibrium, the central bank acts as a clearing house. However, liquidity is affected by the delay time.

\section{Stochatic  Games and Systemic Risk}\label{MFG-Systemic-Risk}\label{sec:nodelay}

The aim of this section is to briefly review the model of inter-bank lending or borrowing without delay studied in \cite{R.Carmona2013}.  
It is described by the model presented in the previous section but with $\tau>T$ so that the delay term $\alpha^i_{t-\tau}$ in (\ref{Xi-1}) is simply zero. The setup (\ref{Ji-1},\ref{fi},\ref{gi}) of the stochastic game remains the same.

The open-loop problem consists in searching for an equilibrium among strategies $\{\alpha^i_t, i=1,\cdots,N\}$ which are adapted processes satisfying some integrability property such as $\EE\left(\int_0^T|\alpha^i_t|dt\right)<\infty$. 
The Hamiltonian for bank $i$ is given by
\begin{align}
\label{Hamiltonian}
H^i(x,y^i,\alpha)
&=
\sum_{k=1}^N\alpha^ky^{i,k}+\frac{(\alpha^i)^2}{2}
-q\alpha^i(\o x-x^i)+\frac{\epsilon}{2}(\o x-x^i)^2,
\end{align}
where $y^i=(y^{i,1},\cdots,y^{i,N}),\, i=1,\cdots,N$ are the adjoint variables.

For a given $\alpha=(\alpha^i)_ {i=1,\cdots,n}$, the controlled forward dynamics of the states $X^i_t$ are given by (\ref{Xi-1}) without the delay term and with initial conditions $X^i_0=\xi^i$. The adjoint processes $Y^i_t=(Y_t^{i,j};\,j=1,\cdots,N)$ and $Z^i_t=(Z_t^{i,j,k};\; j=1,\cdots,N,\; k=1,\cdots,N)$ for $ i=1,\cdots,N$ are defined as the solutions of the backward stochastic differential equations (BSDEs):
\be
\label{ol-dYij}
dY_t^{i,j}=-\partial_{x^j}H^i(X_t,Y^{i}_t,\alpha_t)dt+\sum_{k=1}^NZ^{i,j,k}_tdW^k_t
\en
with terminal conditions $Y^{i,j}_T=\partial_{x^j}g_i(X_T)$ for $i,j=1,\cdots,N$ where $g_i$ is given by (\ref{gi}). For each admissible strategy profile $\alpha=(\alpha^i)_ {i=1,\cdots,n}$, standard existence and uniqueness results for BSDEs apply and the existence of the adjoint processes is guaranteed. Note that from (\ref{Hamiltonian}), we have
\ban
\partial_{x^j}H^i=-q\alpha^i(\frac{1}{N}-\delta_{i,j})+\epsilon(\overline{x}-x^i)(\frac{1}{N}-\delta_{i,j}).
\ean
The necessary condition of the Pontryagin stochastic maximum principle suggests that one minimizes the Hamiltonian $H^i$  with respect to $\alpha^i$ which gives:
\be\label{alpha-open-loop}
\hat\alpha^i=-y^{i,i}+q(\o x-x^i).
\end{equation}
With this choice for the controls $\alpha^i$, the forward  equation  becomes coupled with the backward equation (\ref{ol-dYij}) to form a forward-backward coupled system. In the present linear-quadratic  case, we make the 
ansatz
\be\label{ansatz-open-loop}
Y_t^{i,j}=\phi_t(\frac{1}{N}-\delta_{i,j})(\o X_t-X^i_t),
\en
for some deterministic scalar function $\phi_t$ satisfying the terminal condition $\phi_T=c$. 
Using this ansatz, the backward equations (\ref{ol-dYij}) become
\be\label{ol-dYij2}
dY_t^{i,j}=(\frac{1}{N}-\delta_{i,j})(\o X_t-X^i_t)\left[q(1-\frac{1}{N})\phi_t-(\epsilon-q^2)\right]dt+
\sum_{k=1}^NZ^{i,j,k}_tdW^k_t.
\en
Using (\ref{alpha-open-loop}) and (\ref{ansatz-open-loop}), the forward equation  becomes
\be\label{ol-dXi2}
dX^i_t=\left[q+(1-\frac{1}{N})\phi_t\right](\o X_t-X^i_t)dt+\sigma  dW^i_t.
\en
Differentiating the ansatz (\ref{ansatz-open-loop}) and identifying with the Ito's representation (\ref{ol-dYij2}), one obtains from the martingale terms the deterministic adjoint variables
$$
Z^{i,j,k}_t=\phi_t\sigma(\frac{1}{N}-\delta_{i,j})(\frac{1}{N}-\delta_{i,k})\,\mbox{for }\,\, k=1,\cdots,N,
$$
and from the drift terms  that the function $\phi_t$ must satisfy the scalar Riccati equation
  \be\label{olRiccati}
 \dot\phi_t=2q(1-\frac{1}{2N})\phi_t+(1-\frac{1}{N})\phi_t^2-(\epsilon-q^2),
 \en
with the terminal condition $\phi_T=c$. The explicit solution is given in \cite{R.Carmona2013}.
 Note that the form (\ref{alpha-open-loop}) of the control $\alpha^i_t$,  and the ansatz (\ref{ansatz-open-loop}) combine to give:
 \be\label{ol-alphai}
 \alpha^i_t=\left[q+(1-\frac{1}{N})\phi_t\right](\o X_t-X^i_t),
 \en
so that, in this equilibrium, the forward equations become
\be\label{X-nodelay1}
dX^i_t=\left(q+(1-\frac{1}{N})\phi_t\right)(\overline X_t-X^i_t)dt+\sigma d{W}^{i}_t.
\en
Rewriting  $(\overline X_t-X^i_t)$ as $\frac{1}{N}\sum_{j=1}^{N}(X^j_t-X^i_t)$, we see that the central bank is simply acting as a clearing house. From the form (\ref{X-nodelay1}), we observe that the $X^i$'s are mean-reverting to the average capitalization given by 
\[
d\o X_t=\frac{\sigma}{N}\sum_{j=1}^Nd{W}^j_t,\quad \o X_0=\frac{1}{N}\sum_{j=1}^N\xi^j.
\]
In \cite{Fouque-Sun}, we identified the systemic event as $$\left\{\min_{0\leq t\leq T}(\o X_t-\o X_0)\leq D\right\}$$ and we computed its probability
\ba\label{Prob-systemic-event}
\PP\left(\min_{0\leq t\leq T}(\o X_t-\o X_0)\leq D\right) &=& 2\Phi\left(\frac{D\sqrt{N}}{\sigma\sqrt{T}}\right),
\ea
where $\Phi$ is the ${\cal N}(0,1)$-cdf. This systemic risk probability is  exponentially small of  order $\exp(-D^2N/(2\sigma^2T))$ as in the large deviation estimate.

\section{Stochastic Games with Delay}
\label{Systemic-Risk-SDDE}
Most often, a tailor made version of the stochastic maximum principle is used as a workhorse to construct open loop Nash equilibria for stochastic differential games. Here, we provide such a tool in a more general set up than used in the paper because we believe that this result is of independent interest on its own. We then specialize it to the model considered for systemic risk in Section \ref{sec:example}.

\subsection{The Model}

We work with a finite horizon $T>0$. Recall that we denote by $\tau>0$ the delay length. As explained in the introduction, the delay is implemented with a (signed) measure $\theta$ on $[0,\tau]$, and in the case of interest, we shall use the particular case $\theta=\delta_0 - \delta_\tau$.
All the stochastic processes are defined on a probability space $(\Omega,\cF,\PP)$ equipped with a right continuous filtration $\FF=(\cF_t)_{0\le t\le T}$. The state and control processes are denoted by $\bold{X}=(X_t)_{0\le t\le T}$ and $\bold{\alpha}=(\alpha_t)_{0\le t\le T}$.
They are progressively measurable processes with values in $(\RR^d)^N$ and a closed convex subset $A$ of $(\RR^d)^N$ respectively.
They are linked by the dynamical equation:
\begin{equation}
\label{fo:control_equation}
dX_t=<\alpha_{[t]}, \theta> dt +\sigma dW_t
\end{equation}
where $\bold{W}
=(W_t)_{0\le t\le T}$ is a $(d\times N)$-dimensional $\FF$-Brownian motion, $\sigma$ is a positive constant or a matrix. We use the notation $\alpha_{[t]}=\alpha_{[t-\tau,t]}$ for the restriction of the path of $\alpha$ to the interval $[t-\tau,t]$. By convention, and unless specified otherwise, we extend functions defined on the interval $[0,T]$ to functions on $[-\tau,T+\tau]$ by setting them equal to $0$ outside the interval $[0,T]$.
Also, we use the bracket notation $<f, \theta>$ to denote the integral $\int_0^\tau f(s)\theta(ds)$.

\vskip 4pt
We assume that the dynamics of the state $X_t$ of the system are given by a stochastic differential equation
\eqref{fo:control_equation} which we can rewrite in coordinate form if we denote by $X^i_t$ the $N$ components of $X_t$, in which case we can interpret $X^i_t$ as the private state of player $i$:
\ba\label{fo:Xi_general}
dX^{i}_t &=&\left(\int_0^\tau\alpha^i_{t-s}\theta(ds)\right)dt+\sigma d{W}^{i}_t,\quad 0\leq t\leq T,
\ea 
where the components ${W}^i_t$, $i=1,\cdots,N$ of $W_t$ are independent standard Wiener processes, and the component processes $(\alpha^{i}_t)_{t\ge 0}$ can be interpreted as the strategies used by the individual players. As explained in the introduction,
 $\theta$ is a nonnegative measure on $[0,\tau]$ implementing the impact of the delay on the dynamics. 
Recall that the special case of interest corresponds to $\theta=\delta_0-\delta_\tau$.
We assume the initial conditions: 
\be
X^i_0=\xi^i, 
\qquad\text{and}\qquad
\alpha^i_t=0,\quad t\in[-\tau,0).
\en
The assumptions that the various states have the same volatility $\sigma >0$ and the delay measure $\theta$ is the same for all the players are only made for convenience. These symmetry properties are important to derive mean field limits, but they are not really needed when we deal with finitely many players.
The objective function of player $i$ is given by (\ref{Ji-1}) which we repeat here:
\ban
J^{i}(\alpha)&=&\EE\bigg\{\int_{0}^{T}f_i(X_t,\alpha^i_t)
dt+g_i(X_T)\bigg\}.
\ean
For the sake of simplicity, we assume that the cost $f_i$ to player $i$ depends only upon the control $\alpha^i_t$ of player $i$, and not on the controls $\alpha^j_t$ for $j\ne i$ of the other players.
In the case of games with mean field interactions, the cost functions are often of the form $f_i(x,\alpha)=f(x^i,\overline{x},\alpha)$ and $g_i(x)=g(x^i,\overline{x})$,
as in the particular case of the systemic risk model studied in this paper where:
\be
f_i(x,\alpha^i)=f(x^i,\overline{x},\alpha^i)
=\frac{(\alpha^i)^2}{2}-q\alpha^i(\overline{x}-x^i)
+\frac{\epsilon}{2}(\overline{x}-x^i)^2,\nonumber
\en
for $q\geq 0$ and $\epsilon>0$ as in (\ref{fi}), and:
\be
g_i(x)=g(x^i,\overline{x})=\frac{c}{2}\left(\overline{x}-x^{i}\right)^2,\quad c\geq 0,\nonumber
\en
as in (\ref{gi}) and with $q^2< \epsilon$ to make sure that $f_i(x,\alpha)$ is convex in $(x,\alpha)$. 
Next, we introduce the system of adjoint equations.

\subsection{The Adjoint Equations}
For each player $i$ and each given admissible control $\balpha^i
=(\alpha^i_t)_{0\le t\le T}$ for player $i$, 
we define the adjoint equation for player $i$ as the Backward Stochastic Differential Equation (BSDE):
\begin{equation}
\label{fo:games_adjoints}
d Y^i_t=-\partial_x f_i(X_t,\alpha^i_t) dt + Z^i_t dW_t, \qquad 0\le t\le T
\end{equation}
with terminal condition $Y^i_T=\partial_x g_i(X_T)$, and we call the processes $\bY^i=(Y^i_t)_{0\le t\le T}$ and $\bZ^i
=(Z^i_t)_{0\le t\le T}$ the  adjoint processes corresponding to the strategy $\balpha^i=(\alpha^i_t)_{0\le t\le T}$ of player $i$.
Notice that each $\bY^i$ has the same dimension as $\bX$, namely $N\times d$ if $d$ is the dimension of each individual player private state $X^i_t$, while each $\bZ^i$ has dimension $N^2\times d$.
Accordingly, we shall use the notation $Y^i_t=(Y^{i,j}_t)_{j=1,\cdots,N}$ where each $Y^{i,j}_t$ has the same dimension $d$ as each of the private states $X^j_t$, and similarly, $Z^i_t=(Z^{i,j,k}_t)_{j,k=1,\cdots,N}$. In the application of interest to us in this paper we have $d=1$.

\vskip 4pt
As before, the following notation will turn out to be helpful.
If $\bold{Y}
=(Y_t)_{0\le t\le T}$ is a progressively measurable process (scalar or multivariate) with continuous sample paths, we denote by $\tilde{\bold{Y}}=(\t Y_t)_{0\le t\le T}$
the process defined by:
$$
\t Y_t=\EE\bigg[\int_0^\tau Y_{t+s}\theta(ds)\;\big\vert \cF_t\bigg]=\int_0^\tau\EE[Y_{t+s}\vert \cF_t]\;\theta(ds),\qquad 0\le t\le T.
$$
Moreover, for each $t\in[0,T]$, $x\in (\RR^{d})^N$ and $y\in\RR^d$, we denote by $\hat\alpha^i(x,y)$ any $\alpha\in \RR^d$ satisfying:
\begin{equation}
\label{fo:alphahat}
\partial_\alpha f_i(x,\alpha) = -y.
\end{equation}
Under specific assumptions the implicit function theorem will provide existence of $\hat\alpha_i$, and regularity properties of this function with respect to the variables $x$  and $y$.

\subsection{Sufficient Condition for Optimality}

\begin{theorem}
\label{th:pontsufficient}
Let us assume that the cost functions $f_i$ are continuously differentiable in $(x,\alpha)\in(\RR^{d})^N\times \RR^d$, and $g_i$ are continuously differentiable on 
$(\RR^d)^N$ with partial derivatives of (at most) linear growth, and that:
\begin{itemize}\itemsep=-1pt
\item[(i)] the functions $g_i$  are convex;
\item[(ii)] the functions $(x,\alpha)\mapsto f_i(x,\alpha)$ are convex.
\end{itemize}
If  $\balpha=(\alpha^1_t,\cdots,\alpha^N_t)_{0\le t\le T}$ is an admissible adapted (open loop) strategy profile, and  $(\bX,\bY,\bZ)=\big(( {X_t^1},\cdots, {X_t^N}),( {Y_t^1},\cdots, {Y_t^N}),( {Z_t^1},\cdots, {Z_t^N})\big)$ are adapted process such that the dynamical equation \eqref{fo:control_equation} and the adjoint equations \eqref{fo:games_adjoints}
are satisfied for the controls $\alpha^i_t=\hat\alpha^i(X_t,\tilde{Y}^{i,i}_t)$, then
the strategy profile $\balpha=(\alpha^1_t,\cdots,\alpha^N_t)_{0\le t\le T}$ is an open loop Nash equilibrium.
\end{theorem}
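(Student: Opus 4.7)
The plan is to adapt the classical sufficient Pontryagin argument to the delayed setting. Fix a player $i$ and let $\bbeta^i=(\beta^i_t)_{0\le t\le T}$ be any admissible deviation for that player, the other players keeping their components of $\balpha$; denote by $\bX^{\beta^i}$ the resulting state. Since only player $i$'s drift changes and the volatility is unaffected, the difference $\Delta\bX_t:=\bX^{\beta^i}_t-\bX_t$ has finite variation and only its $i$-th coordinate is nonzero, with $d\Delta X^i_t=\int_0^\tau(\beta^i_{t-s}-\alpha^i_{t-s})\theta(ds)\,dt$. Using the convexity of $g_i$ and of $(x,\alpha)\mapsto f_i(x,\alpha)$, the cost difference is bounded below as
\[
J^i(\balpha^{-i},\bbeta^i)-J^i(\balpha)\;\ge\;\EE\langle\partial_x g_i(X_T),\Delta X_T\rangle + \EE\!\int_0^T\!\bigl[\langle\partial_x f_i(X_t,\alpha^i_t),\Delta X_t\rangle+\partial_\alpha f_i(X_t,\alpha^i_t)(\beta^i_t-\alpha^i_t)\bigr]dt,
\]
so it suffices to show the right-hand side is nonnegative.

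I would then apply It\^o's formula to $\langle Y^i_t,\Delta X_t\rangle=Y^{i,i}_t\,\Delta X^i_t$. The quadratic covariation vanishes because $\Delta\bX$ has no martingale part, and the BSDE \eqref{fo:games_adjoints} together with $Y^i_T=\partial_x g_i(X_T)$ yields, upon integration on $[0,T]$ and expectation (the stochastic integrals being true martingales under the linear-growth assumption on $\partial_x f_i,\partial_x g_i$ and admissibility of $\balpha,\bbeta^i$),
\[
\EE\langle\partial_x g_i(X_T),\Delta X_T\rangle \;=\; -\EE\!\int_0^T\!\langle\partial_x f_i(X_t,\alpha^i_t),\Delta X_t\rangle\,dt + \EE\!\int_0^T\!Y^{i,i}_t\!\int_0^\tau(\beta^i_{t-s}-\alpha^i_{t-s})\theta(ds)\,dt.
\]

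The delay-specific step is to convert the double integral on the right into one featuring the advanced conditional process $\tilde Y^{i,i}$. By Fubini, changing variables $u=t-s$, using the conventions that $\beta^i_u-\alpha^i_u$ vanishes on $[-\tau,0)$ and $Y^{i,i}_{u+s}$ vanishes for $u+s>T$, and then invoking the tower property with $\beta^i_u-\alpha^i_u$ being $\cF_u$-measurable, one gets
\[
\EE\!\int_0^T\!Y^{i,i}_t\!\int_0^\tau(\beta^i_{t-s}-\alpha^i_{t-s})\theta(ds)\,dt\;=\;\EE\!\int_0^T\!\tilde Y^{i,i}_t(\beta^i_t-\alpha^i_t)\,dt.
\]
Plugging this back collapses the convexity lower bound to $\EE\!\int_0^T[\partial_\alpha f_i(X_t,\alpha^i_t)+\tilde Y^{i,i}_t](\beta^i_t-\alpha^i_t)\,dt$, which vanishes because $\alpha^i_t=\hat\alpha^i(X_t,\tilde Y^{i,i}_t)$ is characterized by $\partial_\alpha f_i(X_t,\alpha^i_t)=-\tilde Y^{i,i}_t$, proving $J^i(\balpha^{-i},\bbeta^i)\ge J^i(\balpha)$ and hence the Nash property.

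The main obstacle is the Fubini/change-of-variables step that replaces $Y^{i,i}$ by its advanced conditional average $\tilde Y^{i,i}$ at the time the control is actually chosen: one must carefully track the two extension-by-zero conventions (on $[-\tau,0)$ for controls and past $T$ for the adjoint), justify exchanging the signed measure $\theta$, the Lebesgue integral in $t$, and the expectation under the integrability afforded by admissibility and standard BSDE estimates, and recognize the resulting inner integral as a conditional expectation. Once this identity is in place, the convexity inequalities and the defining property of $\hat\alpha^i$ close the argument in the usual way.
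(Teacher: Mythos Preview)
Your proof is correct and follows essentially the same approach as the paper: convexity of $g_i$ and $f_i$, integration by parts of $\langle Y^i_t,\Delta X_t\rangle$ using the adjoint BSDE, and then the Fubini/tower-property computation that turns $\EE\int_0^T Y^{i,i}_t\int_0^\tau(\beta^i_{t-s}-\alpha^i_{t-s})\theta(ds)\,dt$ into $\EE\int_0^T \tilde Y^{i,i}_t(\beta^i_t-\alpha^i_t)\,dt$, after which the first-order condition \eqref{fo:alphahat} closes the argument. The only cosmetic difference is that you apply the convexity of $f_i$ at the outset to obtain a single lower bound, whereas the paper uses convexity of $g_i$ first, performs the integration by parts, and then invokes convexity of $f_i$; the content of the delay computation is identical.
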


\begin{proof}
We follow the proof given in \cite{CarmonaSIAM2016} in the case without delay. We fix $i\in\{1,\cdots,N\}$, a generic admissible control strategy $(\beta_t)_{0\le t\le T}$ for player $i$, and for the sake of simplicity, we denote by $X'$ the state $X_t^{(\hat \alpha^{-i},\beta)}$ controlled by the strategies $(\hat{\alpha}^{-i},\beta)$. The function $g_i$ being convex, almost surely, we have:
{\small
\begin{eqnarray}
\label{fo:gdiff}
&&g_i(X_T)-g_i(X'_T)\nonumber\\
&&\phantom{?}\le (X_T-X'_T) \cdot \partial_xg_i(X_T)\nonumber\\
&&\phantom{?}= (X_T-X'_T) \cdot Y^i_T\nonumber\\
&&\phantom{?}=\int_0^T (X_t-X'_t)\;d Y^i_t +\int_0^T  Y^i_t\;d( X_t-X'_t) \nonumber\\
&&\phantom{?}=-\int_0^T (X_t-X'_t) \cdot \partial_xf_i(X_t,\alpha^i_t)\; dt
+\int_0^T Y^i_t \cdot <\alpha_{[t]}-(\hat \alpha^{-i},\beta)_{[t]},\theta>\;dt \;+\:\text{martingale}\nonumber\\
&&\phantom{?}=-\int_0^T (X_t-X'_t) \cdot \partial_xf_i(X_t,\alpha^i_t)\; dt
+\int_0^T Y^{i,i}_t \cdot <\alpha^i_{[t]}-\beta_{[t]},\theta>\;dt \;+\:\text{martingale}.\nonumber
\end{eqnarray}
}
Notice that we can use the classical form of integration by parts is due to the fact that the volatilities of all the states are the same constant $\sigma$.
Taking expectations of both sides and plugging  the result into
$$
J^i(\balpha)-J^i((\balpha^{-i},\bbeta))=\EE\bigg\{\int_0^T[f_i(X_t,\alpha^i_t)-f_i(X'_t, \beta_t)]dt\bigg\}+\EE\{g_i(X_T)-g_i(X'_T)\},
$$
we get:
\begin{eqnarray}
\label{fo:Jdiff}
&&J^i(\balpha)-J^i((\balpha^{-i},\bbeta))\nonumber\\
&&\phantom{????}\le\EE\left\{\int_0^T[f_i( X_t,\alpha^i_t)-f_i(X'_t, \beta_t)]dt 
-\int_0^T (X_t-X'_t) \cdot \partial_xf_i(X_t,\alpha^i_t)\; dt\right\}\nonumber\\
&&\phantom{???????????}+\EE\left\{\int_0^T Y^{i,i}_t \cdot <\alpha^i_{[t]}-\beta_{[t]},\theta>\;dt \right\}\nonumber\\
&&\phantom{????}\le\EE\left\{\int_0^T[\alpha^i_t-\beta_t]\partial_\alpha f_i(X_t,\alpha^i_t)
+Y^{i,i}_t \cdot <\alpha^i_{[t]}-\beta_{[t]},\theta>\;dt \right\}.
\end{eqnarray}
Notice that:
\begin{equation*}
\begin{split}
\EE\bigg[\int_0^T Y^{i,i}_t\cdot <\alpha^i_{[t]}-\beta_{[t]},\theta> dt\bigg]
&=\EE\bigg[\int_0^\tau\bigg(\int_{-s}^{T-s}  Y^{i,i}_{t+s}[\alpha^i_{t}-\alpha^i_{t}]dt\bigg)\,\theta(ds)\bigg]\\
&=\int_0^\tau\int_{0}^{T} \EE[Y^{i,i}_{t+s}[\alpha^i_{t}-\beta_{t}]dt\,\theta(ds)]\\
&=\int_0^\tau\int_{0}^{T} \EE[\EE[Y^{i,i}_{t+s}|\cF_t][\alpha^i_{t}-\beta_{t}]dt\,\theta(ds)]\\
&=\EE\bigg[\int_0^\tau\int_{0}^{T}\bigg(\int_0^\tau \EE[Y^{i,i}_{t+s}|\cF_t]\theta(ds)\bigg)[\alpha^i_{t}-\beta_{t}]dt\,\bigg]\\
&=\EE\bigg[\int_0^T \widetilde{Y}_t^{i,i}\cdot [\alpha^i_{t}-\beta_{t}]dt\bigg].
\end{split}
\end{equation*}
Consequently:
\begin{equation*}
\begin{split}
J^i(\balpha)-J^i((\balpha^{-i},\bbeta))&\le \EE\left\{\int_0^T\bigg([\alpha^i_t-\beta_t]\partial_\alpha f_i(X_t,\alpha^i_t)
+ \widetilde{Y}_t^{i,i}\cdot [\alpha^i_{t}-\beta_{t}]\bigg)\;dt \right\}
\\
&=0
\end{split}
\end{equation*}
by definition \eqref{fo:alphahat} of $\hat\alpha(t,\hat X_t,\widetilde{Y}_t^{i,i})$.
\end{proof}

\subsubsection{Example}\label{sec:example}
We shall use the above result when $d=1$,  $\theta=\delta_0-\delta_{-\tau}$
so that $<\alpha_{[t]},\theta>=\int_0^\delta \alpha_{t-\tau}\,\theta(d\tau)=\alpha_t-\alpha_{t-\delta}$,  and the cost functions are given by \eqref{fi} and \eqref{gi}, namely:
$$
f_i(x,\alpha) = \frac12\alpha^2-q\alpha(\o x - x^i)+\frac{\epsilon}{2}(\o x- x^i)^2
$$
for some positive constants $q$ and $\epsilon$ satisfying $q<\epsilon^2$ which guarantees that 
the functions $f_i$ are convex.
Notice that relation \eqref{fo:alphahat} gives  $\hat\alpha^i(x,y)=-y-q(x^i-\o x)$. To derive
the adjoint equations we compute:
$$
\partial_{x^i} f_i(x,\alpha) = \bigl(1-\frac1N\bigr)[q\alpha+\epsilon(x^i-\o x)],
\qquad\text{and}\qquad
\partial_{x^j} f_i(x,\alpha) = -\frac1N[q\alpha+\epsilon(x^i-\o x)],
$$
for $j\ne i$. Accordingly, the system of forward and advanced backward equations identified in the above theorem reads:
{\small
\begin{equation}
\label{fo:FABSDE_general}
\begin{cases}
&dX^i_t=- <\widetilde{Y}_{[t]}^{i,i}+ q (X^i_{[t]}-\o X_{[t]}),\theta>dt +\sigma dW^i_t,\qquad i=1,\cdots,N\\
&dY^{i,j}_t=\bigl(\delta_{i,j}-\frac1N\bigr)[q\widetilde{Y}^{i,j}_t+(q^2-\epsilon)(X^i_t-\o X_t)]dt + \sum_{k=1}^NZ^{i,j,k}_t dW^k_t\qquad i,j=1,\cdots,N\\
\end{cases}
\end{equation}}
where we used the Kronecker symbol $\delta_{i,j}$ which is equal to $1$ if $i=j$ and $0$ if $i\ne j$. If we specialize this system to the case $\theta=\delta_0-\delta_\tau$, we have $\widetilde{Y}^{i,j}_t=Y^{i,j}_t - \EE[Y^{i,j}_{t+\tau}|\cF_t]$, so that the forward advanced-backward system reads:
{\small
\begin{equation}
\label{fo:FABSDE}
\begin{cases}
&dX^i_t=\bigl( -Y^{i,i}_t+Y^{i,i}_{t-\tau} +\EE[Y^{i,i}_{t+\tau}|\cF_t] - \EE[Y^{i,i}_{t}|\cF_{t-\tau}]\\
&\hskip 75pt - q[ X^i_t-X^i_{t-\tau} -\o X_t+\o X_{t-\tau}]\bigr)dt +\sigma dW^i_t,\qquad i=1,\cdots,N\\
&dY^{i,j}_t=\bigl(\delta_{i,j}-\frac1N\bigr)[q Y^{i,j}_t- q\EE[Y^{i,j}_{t+\tau}|\cF_t]+(q^2-\epsilon)(X^j_t-\o X_t)]dt + \sum_{k=1}^NZ^{i,j,k}_t dW^k_t\\
&\hskip 255pt  i,j=1,\cdots,N.\\
\end{cases}
\end{equation}
}

The version of the stochastic maximum principle proved in Theorem \ref{th:pontsufficient} reduces the problem of the existence of Nash equilibria for the system, to the solution of forward anticipated-backward stochastic differential equation. The following result can be used to resolve the existence issue but first we make the following remark which is key in term of financial interpretation.

\begin{remark}[\textbf{Clearing House Property}] 
\label{re:clearing}
In the present situation, in contrast with the case without delay presented in Section \ref{MFG-Systemic-Risk}, we will not be able to derive explicit formulas for the equilibrium optimal strategies such as (\ref{ol-alphai}). 
However, it is remarkable to see that the {\it clearing house property} $\sum \alpha^i=0$ still holds. Indeed, setting $i=j$ in \eqref{fo:FABSDE_general} and summing over $N$ to derive an equation for  $\overline{Y}_t=\frac{1}{N}\sum_{i=1}^NY^{i,i}_t$ and
$\overline Z^{k}_t=\frac{1}{N}\sum_{i=1}^NZ^{i,i,k}_t$, we find:
\be
\nonumber d\overline{Y}_t=-\left(\frac{1}{N}-1\right)
q \widetilde{\overline{Y}_t}dt
+\sum_{k=1}^N\overline Z^{k}_t dW^k_t,\quad t\in[0,T],
\en
with terminal condition $\overline{Y}_t=0$ for $t\in[T,T+\tau]$. This equation admits the unique solution: 
\[
\overline{Y}_t=0,\quad t\in [0,T+\tau], \quad \mbox{and}\quad \overline Z^{k}_t=0,\, k=1,\cdots, N, \, t\in [0,T].
\]
and as a result,
\be
\label{optimal-control-bar}
\overline{\hat\alpha}_t=-\widetilde{\overline{Y}_t}=0.
\en
\end{remark}

\vskip 4pt
In what follows, on the top of $q^2<\epsilon$, we further assume that
\ba\label{condition}
q^2(1-\frac{1}{2N})^2\leq \epsilon(1-\frac{1}{N}),
\ea
which is satisfied for $N$ large enough, or $q$ small enough.

\begin{theorem} 
\label{th:FABSDE_existence} 
The FABSDE \eqref{fo:FABSDE} has a unique solution.
\end{theorem}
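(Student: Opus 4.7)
The plan is to exploit the symmetric linear structure of \eqref{fo:FABSDE} to reduce the $N^2$-dimensional adjoint system to $N$ identical scalar forward--anticipated-backward pairs, and then to obtain existence and uniqueness for the scalar pair by a Banach fixed-point argument on a weighted $L^2$ space.

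\emph{Symmetric reduction.} By Remark \ref{re:clearing}, $\o Y_t \equiv 0$; summing the forward equation over $i$ gives $d\o X_t = \sigma\,d\o W_t$, independent of the controls. Setting $M^i_t := X^i_t - \o X_t$, the forward equation for each $M^i$ involves only $Y^{i,i}$ and its shifted versions, while the backward equation for $Y^{i,j}$ involves only $M^i$ (not $M^j$). For fixed $i$ and any two indices $j_1, j_2 \ne i$, the anticipated BSDEs satisfied by $Y^{i,j_1}$ and $Y^{i,j_2}$ have the same driver and the same terminal value $-(c/N)M^i_T$, so by the Peng--Yang uniqueness theorem for anticipated BSDEs they coincide. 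The system therefore collapses to $N$ scalar FABSDEs in the pair $(M^i, Y^{i,i})$, coupled only through the common Brownian motions appearing in the martingale representation of the $Z^{i,i,k}$.

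\emph{Fixed-point scheme and contraction.} Fix $i$ and work on the Banach space $\cH$ of adapted square-integrable scalar processes on $[0,T]$ endowed with the equivalent norm $\|U\|_\beta^2 := \EE\int_0^T e^{-\beta t}|U_t|^2\,dt$ for $\beta > 0$ to be selected. Define $\Phi : \cH \times \cH \to \cH \times \cH$ by $\Phi(Y, M) = (Y', M')$, where $M'$ is the unique solution of the linear stochastic delay SDE obtained by freezing $Y$ in the drift (standard theory of SDDEs), and $Y'$ is the unique solution of the linear anticipated BSDE driven by $M'$ (via \cite{Peng2009}). For two inputs with difference $(\Delta Y, \Delta M)$, apply It\^o's formula to $e^{-\beta t}(\Delta M'_t)^2$ and $e^{-\beta t}(\Delta Y'_t)^2$, take expectations, and handle the shifted terms via the translation identity $\EE\int_0^T e^{-\beta t}\varphi_{t\pm\tau}\,dt = e^{\mp\beta\tau}\EE\int_{\pm\tau}^{T\pm\tau} e^{-\beta s}\varphi_s\,ds$ together with the tower property for conditional expectations. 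After absorbing the cross products $\Delta M \cdot \Delta Y$ with Young's inequality, the resulting energy estimate takes the form $\|\Phi(Y_1, M_1) - \Phi(Y_2, M_2)\|_\beta \le K(N,q,\epsilon,\tau,\beta)\,\|(Y_1 - Y_2, M_1 - M_2)\|_\beta$, with $K < 1$ for $\beta$ chosen large enough. Banach's contraction principle then yields a unique fixed point, which is the desired solution of \eqref{fo:FABSDE}.

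\emph{Main obstacle.} The delicate step is the quantitative energy estimate controlling $K$: the translation identities produce boundary contributions on $[0,\tau]$ and $[T-\tau,T]$ coming from the extension of $Y^{i,i}$ by zero outside $[0,T]$, and these contributions are not dissipated by the interior dynamics. The strengthened convexity bound \eqref{condition}, rather than the mere hypothesis $q^2 < \epsilon$ used for Pontryagin sufficiency, is precisely what is needed to ensure that the interior quadratic form in the estimate dominates these boundary contributions uniformly in $T$, so that $K$ can be brought below $1$ without any smallness assumption on the horizon.
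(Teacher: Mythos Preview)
Your symmetric reduction is correct and coincides with the paper's: one centers by subtracting $\bar X_t$, uses the clearing-house identity $\bar Y\equiv 0$, and is left with $N$ decoupled scalar forward--anticipated-backward pairs $(X^{i,c},Y^{i,i})$; the off-diagonal $Y^{i,j}$ with $j\ne i$ are then linear anticipated BSDEs with known coefficients, solved afterwards.

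The gap is in your existence step for the scalar pair. The claim that the two-step Picard map $Y\mapsto M'(Y)\mapsto Y'$ becomes a strict contraction in the $e^{-\beta t}$-weighted $L^2$ norm ``for $\beta$ large enough'' is not justified, and for a genuinely fully coupled forward--backward system on an arbitrary horizon it is generally false. The weight $e^{-\beta t}$ does buy you a factor $C/\beta$ in the forward estimate $\|\Delta M'\|_\beta^2\le (C/\beta)\|\Delta Y\|_\beta^2$, but the backward step goes the wrong way: the a priori bound for an anticipated BSDE controls $\Delta Y'$ in terms of the driver with an \emph{increasing} weight $e^{\gamma t}$, not a decreasing one. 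When you convert back to the $e^{-\beta t}$ norm you pick up a factor of order $e^{\beta T}$, so the composite Lipschitz constant behaves like $Ce^{\beta T}/\beta$ and cannot be pushed below $1$ by choosing $\beta$ large. This is exactly the obstruction that forces one away from naive Picard iteration for fully coupled FBSDEs on long horizons.

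The paper handles the reduced pair by the Peng--Wu method of continuation (its Lemma \ref{thm-2}): one embeds the system in a one-parameter family indexed by $\lambda\in[0,1]$, with $\lambda=0$ a decoupled system solvable by standard results and $\lambda=1$ the target, and shows that solvability propagates in $\lambda$ by a local contraction whose key a priori bound comes from applying It\^o's formula to the \emph{product} $\widehat X_t\widehat Y_t$. Condition \eqref{condition} enters precisely there, as the monotonicity inequality $\epsilon(1-1/N)-q^2(1-1/(2N))^2>0$ that makes the cross term $\EE\int_0^T \widehat X_t\,d\widehat Y_t+\widehat Y_t\,d\widehat X_t$ coercive; it does not, and cannot, make a direct forward-then-backward iteration contractive. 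Your final paragraph correctly singles out \eqref{condition} as the crucial hypothesis, but misidentifies the mechanism: it is a monotonicity condition exploited through the product estimate in the continuation argument, not a device to dominate boundary terms in a weighted energy inequality.
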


\begin{remark}
While this theorem gives existence of open loop Nash equilibria for the model, it is unlikely that uniqueness holds. However,  the cost functions $f_i$ and $g_i$ depending only upon $x^i$ and $\overline{x}$, one could consider the mean field game problem corresponding to the limit $N\to\infty$, and in this limiting regime, it is likely that the strict convexity of the cost functions could be used to prove some form of uniqueness of the solution of the equilibrium problem.
\end{remark}

\begin{proof}
We first solve the system considering only the case $j=i$. Once this is done, we should be able to inject the process $X_t=(X^1_t,\cdots,X^N_t)$ so obtained into the equation for $dY^{i,j}_t$ for $j\ne i$, and solve this advanced equation with random coefficients.

\vskip 2pt
Summing over $i=1,\cdots,N$ the equations for $X^{i}$ in \eqref{fo:FABSDE_general}, using the clearing house property of Remark \ref{re:clearing}, and denoting $\overline{\xi}=\frac{1}{N}\sum_{i=1}^N\xi^i$ give
\ba
\label{eq:Xbar} \overline{X}_t&=&\overline{\xi}+ \frac{\sigma}{N}\sum_{i=1}^NW^i_t,\quad t\in [0,T].
\ea
Therefore, without loss of generality, we can work with the ``centered" variables
$X_t^{i,c}=X_t^i-\overline{X}_t$, $Y_t^{i,i,c}=Y_t^{i,i}-\overline{Y}_t=Y_t^{i,i}$,  and $Z_t^{i,i,k,c}=Z^{i,i,k}_t-\overline Z^{k}_t=Z^{i,i,k}_t$ which must satisfy the system:
\begin{equation}
\label{fo:FABSDE_centered}
\begin{cases}
&dX_t^{i,c}=- <\widetilde{Y}_{[t]}^{i,i}+ q X_{[t]}^{i,c},\theta>dt +\sigma \sum_{k=1}^N\bigl(\delta_{i,k}-\frac1N\bigr)dW^k_t,\\
&dY^{i,i}_t=\bigl(1-\frac1N\bigr)[q\widetilde{Y}^{i,i}_t+(q^2-\epsilon)X_t^{i,c}]dt + \sum_{k=1}^NZ^{i,i,k}_t dW^k_t\end{cases}
\end{equation}
with $X_0^{i,c}=\xi^{i,c}:=\xi^i-\overline\xi$, $Y_T^{i,i}=-c\left(\frac{1}{N}-1\right) X_T^{i,c}$, and $Y_t^{i,i}=0$ for $t\in(T,T+\tau]$ for $i=1,\cdots,N$.
We solve this system by extending the continuation method (see for example \cite{Peng1999} and 
\cite{Peng2009}) to the case of stochastic games.
We consider a system which is written as a perturbation of the previous one without delay. Since we now work with $i\in\{1,\cdots,N\}$ fixed, we drop the exponent $i$ from the notation for the sake of readability of the formulas.
\begin{equation}
\label{fo:Peng_Wu_system}
\begin{cases}
&dX_t^{\lambda}=\bigl[-(1-\lambda)Y_t^{\lambda}-\lambda <\widetilde{Y}_{[t]}^{\lambda}+ q X_{[t]}^{\lambda},\theta>+\phi_t\bigr]dt\\
&\hskip 135pt  + \sum_{k=1}^N
\bigl[-(1-\lambda)Z_t^{k,\lambda}+\lambda\sigma\bigl(\delta_{i,k}-\frac1N\bigr)+\psi_t^{k}\bigr]dW^k_t,\\
&dY^{\lambda}_t=\bigl[ -(1-\lambda)X_t^{\lambda} +\lambda\bigl(1-\frac1N\bigr)[q\widetilde{Y}^{\lambda}_t+(q^2-\epsilon)X_t^{\lambda}]+r_t\bigr]dt 
+ \sum_{k=1}^NZ_t^{k,\lambda} dW^k_t\end{cases}
\end{equation}
with initial condition $X^{\lambda}_0=\xi^{i,c}$ and terminal condition $Y_T^{\lambda}=(1-\lambda)X_T^{\lambda} -\lambda c\left(\frac{1}{N}-1\right)X_T^{\lambda}+\zeta^{i,i}$ and
$Y_t^{\lambda}=0$ for $t\in(T,T+\tau]$ in the case of $c>0$, and $Y_T^{\lambda}= \zeta^{i,i}$
and  $Y_t^{\lambda}=0$ for $t\in(T,T+\tau]$
in the case of $c=0$. 

\vskip 2pt
Here (recall that $i$ is now fixed), $\phi_t$, $\psi^{k}_t$, $r_t$ are for $k=1,\cdots,N$, square integrable processes which will be chosen at each single step of the induction procedure. Also $\zeta$ is a $L^2(\Omega, {\cal F}_T)$ random variable. Observe that if $\lambda=0$,   the system (\ref{fo:Peng_Wu_system}) is a particular case of the system in Lemma 2.5 in \cite{Peng1999} for which existence and uniqueness is established, and it becomes the system (\ref{fo:FABSDE_centered}) when setting $\lambda=1$, $\zeta^{i,i}=0$, $\phi^i_t=0$, $\psi^{i,i,k}_t=0$, $r^{i,i}_t=0$, $i=1,\cdots,N$ and  $k=1,\cdots,N$, for $0\leq t \leq T$. We only give the proof of  existence and uniqueness for the solution  of the system \eqref{fo:FABSDE_centered} in the case of $c=0$. The same arguments can be used to treat the case $c>0$. 

The proof relies on the following technical result which we prove in the appendix.

\begin{lemma}\label{thm-2}
If there exists $\lambda_0\in[0,1)$ such that for any $\zeta$ and $\phi_t$,  $r_t$, $\psi^{k}_t$, $k=1,\cdots,N$ for $0\leq t \leq T$ the system (\ref{fo:Peng_Wu_system}) admits a unique solution for $\lambda=\lambda_0$, then there exists $\kappa_0>0$, such that for all $\kappa\in[0,\kappa_0)$, (\ref{fo:Peng_Wu_system}) admits a unique solution for any $\lambda\in[\lambda_0,\lambda_0+\kappa)$. 
\end{lemma}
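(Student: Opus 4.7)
The plan is to adapt the classical Peng--Wu continuation method \cite{Peng1999}, in the version extended to anticipated BSDEs by Peng--Yang \cite{Peng2009}, to our fully coupled forward anticipated-backward system. For $\kappa>0$ to be chosen and $\lambda=\lambda_0+\kappa$, let $\cH$ denote the Hilbert space of adapted triples $(U,V,K)$ with $U,V\in L^2_{\cF}(0,T+\tau;\RR)$ (extended by zero outside $[0,T]$) and $K\in L^2_{\cF}(0,T;\RR^N)$. Given $(U^0,V^0,K^0)\in\cH$, define $\Gamma(U^0,V^0,K^0):=(X,Y,Z)$ as the unique solution at $\lambda_0$ of (\ref{fo:Peng_Wu_system}) whose inhomogeneous data absorb the algebraic difference between the $\lambda_0$- and $(\lambda_0+\kappa)$-drifts: $\phi_t$ is replaced by $\phi_t+\kappa[V^0_t-\langle \widetilde V^0_{[t]}+q U^0_{[t]},\theta\rangle]$, $\psi^k_t$ by $\psi^k_t+\kappa[K^{0,k}_t-\sigma(\delta_{i,k}-\tfrac{1}{N})]$, $r_t$ by $r_t+\kappa[U^0_t+(1-\tfrac{1}{N})(q\widetilde V^0_t+(q^2-\epsilon)U^0_t)]$, and similarly for the terminal condition when $c>0$. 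The hypothesis at $\lambda_0$ guarantees $\Gamma$ is well defined, and a fixed point of $\Gamma$ is, by construction, a solution of (\ref{fo:Peng_Wu_system}) at $\lambda_0+\kappa$.

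The contraction is proved by taking two inputs $(U_j,V_j,K_j)$, $j=1,2$, and writing $(\Delta X,\Delta Y,\Delta Z)$ for the difference of their images. This triple solves the $\lambda_0$-system with null initial/terminal data and inhomogeneous terms of order $\kappa$ times $(\Delta U,\Delta V,\Delta K)$. Applying It\^o's formula to $\Delta X_t\cdot\Delta Y_t$ on $[0,T]$, the anticipated cross-term $\Delta Y_t\cdot\langle\Delta\widetilde Y_{[t]},\theta\rangle$ is rearranged via the time-shift identity already used in the proof of Theorem~\ref{th:pontsufficient}, becoming a quadratic expression in $\Delta Y$ on the extended interval $[0,T+\tau]$. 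Combined with the $-(1-\lambda_0)|\Delta Z|^2$ bracket contribution and the convex pairings built into the $\lambda_0$-drifts, one arrives at an estimate of the form
\[
\mu\,\EE\int_0^T\bigl(|\Delta X_t|^2+|\Delta Y_t|^2+|\Delta Z_t|^2\bigr)dt\;\le\; C\kappa\,\bigl(\|\Delta U\|_{\cH}^2+\|\Delta V\|_{\cH}^2+\|\Delta K\|_{\cH}^2\bigr),
\]
for constants $\mu>0$ and $C$ depending only on $N,q,\epsilon,\sigma,T,\tau$ and the total mass of $|\theta|$ on $[0,\tau]$, but crucially \emph{not} on $\lambda_0\in[0,1)$. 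Setting $\kappa_0:=\mu/(2C)$ makes $\Gamma$ a strict contraction for every $\kappa\in[0,\kappa_0)$, and Banach's fixed point theorem supplies the unique solution at $\lambda_0+\kappa$.

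The hard part is securing the uniformity of the monotonicity constant $\mu$ with respect to $\lambda_0$. The delicate contribution is the cross-term $\lambda_0(1-\tfrac{1}{N})q\,\Delta\widetilde Y_t\cdot\Delta X_t$ generated by the anticipated coefficient in the $Y$-drift. After the tower-property bound $\EE\int_0^T|\Delta\widetilde Y_t|^2dt\le |\theta|([0,\tau])^2\,\EE\int_0^{T+\tau}|\Delta Y_t|^2dt$ and a Young inequality, this term can be absorbed into the diagonal only when the residual coefficient of $|\Delta Y|^2$ remains uniformly positive; this is exactly the role of the strengthened assumption (\ref{condition}), since the bare $q^2<\epsilon$ is no longer sufficient in the presence of anticipated feedback. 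The remaining bookkeeping is a routine extension of the standard no-delay continuation argument, and the details are naturally relegated to the appendix.
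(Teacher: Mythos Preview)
Your approach coincides with the paper's: the same fixed-point map $\Gamma$ absorbing the $\kappa$-perturbation into the inhomogeneous data, the same It\^o computation on the product $\Delta X_t\,\Delta Y_t$, and the same identification of condition~(\ref{condition}) as the source of the uniform monotonicity constant. So the strategy is right.

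There is, however, a genuine oversimplification in your displayed estimate. The It\^o product computation does \emph{not} directly yield a single constant $\mu>0$ multiplying all three of $|\Delta X|^2$, $|\Delta Y|^2$, $|\Delta Z|^2$. What one actually gets (see the paper's appendix) is
\[
(1-\lambda_0+\lambda_0\mu)\,\EE\!\int_0^T|\Delta X_t|^2dt
+(1-\lambda_0)\,\EE\!\int_0^T|\Delta Y_t|^2dt
+(1-\lambda_0)\,\EE\!\int_0^T\sum_k|\Delta Z^k_t|^2dt
\;\le\;\kappa K_1(\cdots),
\]
with $\mu=\epsilon(1-\tfrac1N)-q^2(1-\tfrac{1}{2N})^2>0$ from~(\ref{condition}). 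The coefficient on $\Delta X$ is uniformly bounded below, but the coefficients on $\Delta Y$ and $\Delta Z$ are only $(1-\lambda_0)$, which degenerates as $\lambda_0\to 1$. Your sentence ``the residual coefficient of $|\Delta Y|^2$ remains uniformly positive'' is therefore not justified by the product computation alone. The paper closes this gap by invoking separate a~priori estimates for the anticipated BSDE (It\^o on $|\Delta Y_t|^2$ plus Gronwall, as in \cite{Bensoussan2015}), obtaining
\[
\EE\!\int_0^T|\Delta Y_t|^2+\sum_k|\Delta Z^k_t|^2\,dt\;\le\;K_2\,\EE\!\int_0^T|\Delta X_t|^2dt+\kappa K_2(\cdots),
\]
and then trading a fraction $\lambda_0\mu'K_2$ of the good $|\Delta X|^2$ term (with $0<\mu'<\mu/K_2$) against this inequality to manufacture a uniform coefficient $\lambda_0\mu'$ in front of $|\Delta Y|^2$ and $|\Delta Z|^2$. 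Only after this bootstrap does one arrive at the clean contraction estimate you wrote down. This is more than routine bookkeeping; without it the uniformity in $\lambda_0$ that you correctly flag as ``the hard part'' is simply not established.
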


Taking for granted the result of this lemma, we can prove existence and uniqueness for \eqref{fo:Peng_Wu_system}. Indeed, for $\lambda=0$, the result is known. Using Lemma \ref{thm-2}, there exists $\kappa_0>0$ such that  \eqref{fo:Peng_Wu_system} admits a unique solution for $\lambda=0+\kappa$ where $\kappa\in[0,\kappa_0)$. Repeating the inductive argument $n$ times for $1\leq n\kappa_0<1+\kappa_0$ gives the result for $\lambda=1$ and, therefore, the existence of the unique solution for \eqref{fo:FABSDE_centered}. 
Since $X_t^{i,c}=X^i_t-\overline{X}_t$, $Y_t^{i,i,c}=Y^{i,i}_t$ and $Z_t^{i,i,k,c}={Z}_t^{i,i,k}$, and 
$\overline{X}_t$ is given by (\ref{eq:Xbar}), we obtain a unique solution $(X^i_t, Y^{i,i}_t,{Z}^{i,i,k}_t)$ to the system \eqref{fo:FABSDE_general}.
\end{proof}

\section{Hamilton-Jacobi-Bellman (HJB) Approach}\label{sec:HJB}

In this section, we return to the particular case $\theta=\delta_0-\delta_\tau$ of the drift given by the delayed control $\alpha_t-\alpha_{t-\tau}$.
The HJB approach for delayed systems has been applied by \cite{vinter1981infinite} to a deterministic linear quadratic control problem. Later, \cite{F.Gozzi2004} followed a similar
approach for stochastic control problems. Here, we generalize the approach \cite{F.Gozzi2004} based on an infinite dimensional 
representation and functional derivatives. We extend this approach to 
our stochastic game model with delay in order to identify a closed-loop Nash equilibrium.  

Note that two specific features of our discussion require additional work for our argument to be fully rigorous at the mathematical level.  First, the delayed control in the state equation appears as a mass at time $t-\tau$ and a smoothing argument as in \cite{F.Gozzi2015} is needed. Second, we are using functional derivatives and proper function spaces should be introduced for our computations to be fully justified. However, since most of the functions we manipulate are linear or quadratic, we refrain from giving the details. In that sense, and for these two reasons, what follows is merely heuristic. A rigorous proof of the fact that the equilibrium identified in this section is actually a Nash equilibrium will be given in Section \ref{caratheodory}.

\subsection{Infinite Dimensional Representation}
Let $\mathbb{H}^{N}$ be the Hilbert space defined by
\[
 \mathbb{H}^N=\mathbb{R}^{N} \times L^{2}([-\tau,0];\mathbb{R}^{N}),
\]
 with the inner product 
\[
\langle z,\tilde{z} \rangle=z_{0}\tilde{z}_{0}+\int_{-\tau}^{0} z_{1}(\xi)\tilde{z}_{1}(\xi)\,d\xi,
\]
where $z,\tilde{z} \in \mathbb{H}^N$, and $z_{0}$ and $z_{1}(.)$ correspond respectively to the $\mathbb{R}^{N}$-valued and $L^{2}([-\tau,0];\mathbb{R}^{N})$-valued 
components.\par
By reformulating the system of coupled diffusions (\ref{Xi-1}) in the Hilbert space $\mathbb{H}^N$, the system of coupled Abstract Stochastic Differential Equations 
(ASDE) for $Z=(Z^1,\cdots,Z^N) \in \mathbb{H}^N$ appears as
\ba
\label{asde}
d Z_t &=&\left(A Z_t+B \alpha_t \right)dt + G d{W}_t,\quad 0\leq t\leq T,  \\
Z_0 &=& (\xi,0)\in \mathbb{H}^N. \nonumber
\ea
where ${W}_t=(W^1_t,\cdots,W^N_t)$ is a standard $N$-dimensional Brownian motion and $\xi=(\xi^1,\cdots,\xi^N)$.

Here $Z_t = (Z_{0,t},Z_{1,t,r})$, $r \in [-\tau,0]$ corresponds to $(X_t,\alpha_{t-\tau-r})$ in 
the system of diffusions (\ref{Xi-1}). In other words, for each time $t$, in order to find 
the dynamics of the states $X_t$, it is necessary to have $X_t$ itself, and the past of the control $\alpha_{t-\tau-r}$, $r \in [-\tau,0]$.


The operator $A:\it{D}(A) \subset \mathbb{H}^N \rightarrow \mathbb{H}^N$ is defined as
\[
 A:(z_0,z_1(r)) \rightarrow (z_1(0),-\frac{dz_1(r)}{d r}) \hspace{3mm} a.e., \hspace{2mm} r \in [-\tau,0],
\]
and its domain  is
\[
\it{D}(A)=\{(z_0,z_1(.)) \in \mathbb{H}^N: z_1(.) \in W^{1,2}([-\tau,0];\mathbb{R}^{N}),z_1(-\tau)=0\}.
\]
The adjoint operator of $A$ is $A^{\ast}:\it{D}(A^{\ast}) \subset \mathbb{H}^N \rightarrow \mathbb{H}^N$ and is defined by
\[
 A^{\ast}:(z_0,z_1(r)) \rightarrow (0,\frac{dz_1(r)}{d r}) \hspace{3mm} a.e., \hspace{2mm} r \in [-\tau,0],
\]
with domain 
\[
 \it{D}(A^{\ast})=\{(w_0,w_1(.)) \in \mathbb{H}^N: w_1(.) \in W^{1,2}([-\tau,0]);\mathbb{R}^{N}),w_0=w_1(0)\}.
\]
The operator $B:\mathbb{R}^N \rightarrow \mathbb{H}^N$ is defined by
\[
 B:u \rightarrow (u,-\delta_{-\tau}(r)u), \hspace{2mm} r \in [-\tau,0],
\]
where $\delta_{-\tau}(.)$ is the Dirac measure at $-\tau$.\\
\begin{remark}
 Note that in \cite{F.Gozzi2004}, the case of pointwise delay is not considered as the above operator $B$ becomes unbounded because of the dirac measure. Here, we still use 
the unbounded operator $B$ (in a heuristic sense!) and for a rigorous treatment, we refer
 to \cite{F.Gozzi2015} where they use partial
smoothing to accommodate the case of pointwise delay. 
\end{remark}

Finally, the operator $G: \mathbb{R}^N \rightarrow \mathbb{H}^N$ is defined by
\[
 G: z_0 \rightarrow (\sigma z_0,0).
\]
\begin{remark}
 Let $Z_t$ be a weak solution of the system of coupled ASDEs (\ref{asde}) and $X_t$ be a
 continuous solution of the system of diffusions (\ref{Xi-1}), then, with a similar line of reasoning as in Proposition 2  in \cite{F.Gozzi2004}, it can be proved 
that $X_t = Z_{0,t}$, a.s. for all $t \in [0,T]$.
\end{remark}
\subsection{System of Coupled HJB Equations}

In order to use the dynamic programming principle for stochastic games  (we refer to \cite{CarmonaSIAM2016}) in search of closed-loop Nash equilibrium, the
 initial time is varied. At time $t \in [0,T]$,
 given initial state $Z_{t}=z$  (whose second component is the past of the control), bank $i$ chooses the control $\alpha^i$ to minimize its objective function $J^{i}(t,z,\alpha)$.
\ba
J^{i}(t,z,\alpha)=\EE \bigg\{\int_{t}^{T}f_i(Z_{0,s},\alpha^i_s)dt+g_i(Z_{0,T}) \mid Z_{t}=z \bigg\},
\ea
In equilibrium, that is all other banks $j\neq i$ have optimized their objective function, bank $i$'s value function $V^i(t,z)$ is  
\ba
 V^i(t,z)=\inf_{\alpha^i} J^i(t,z,\alpha).
\ea
The set of value functions $V^i(t,z)$, $i=1,\cdots,N$ is a solution (in a suitable sense) of the following system of coupled HJB equations:
\ba
\label{HJB}
 && \partial_t V^i +\frac{1}{2} Tr(Q \partial_{z z} V^i)+ \langle A z,\partial_{z} V^i \rangle + H^i_{0}(\partial_{z}V^i) = 0 ,  \\
&& V^i(T)= g_i ,  \nonumber
\ea
where $Q=G\mathbin{*} G$, and the Hamiltonian function $H^i_0(p^i):\mathbb{H}^N \rightarrow \mathbb{R}$ is defined by
\ba
 H^i_0(p^i) =\inf_{\alpha^i} [ \langle B \alpha,p^i \rangle+ f_i(z_{0},\alpha^i)].
\ea
Here, $p^i \in \mathbb{H}^N$ and can be written as $p^i=(p^{i,1},\cdots,p^{i,N})$ where $p^{i,k} \in \mathbb{H}^1$, $k=1,\cdots,N$.
 Given that $f_i(z_0,\alpha^i)$ is convex in $(z_0,\alpha^i)$, 
\ba
\label{zbest}
\hat{\alpha}^{i}=-\langle B,p^{i,i} \rangle - q (z^i_0 -\bar{z}_0).
\ea
Therefore,
\ba
 H_0^i(p) &&=  \langle B \hat{\alpha},p^{i} \rangle+ f_{i}(z_0,\hat{\alpha}^{i}), \nonumber \\
	  &&= \sum\limits_{k=1}^{N} \langle B ,p^{i,k} \rangle \left(-\langle B,p^{k,k} \rangle - q (z^k_0 -\bar{z}_0)\right)   \nonumber \\
           &&+ \frac{1}{2}{\langle B,p^{i,i} \rangle}^2+\frac{1}{2}(\epsilon-q^2)(\bar{z}_0 -z^i_0)^2.
\ea
We then make the ansatz
\ba
\label{ansatz}
 V^i(t,z) &=& E_0(t) (\bar{z}_0-z_0^i)^2-2 (\bar{z}_0-z_0^i) \int\limits_{-\tau}^{0}E_1(t, -\tau-s)(\bar{z}_{1,s}-z^i_{1,s}) d s \nonumber  \\
       &+& \int\limits_{-\tau}^{0} \int\limits_{-\tau}^{0} E_2(t, -\tau-s, -\tau-r) 
(\bar{z}_{1,s}-z^i_{1,s}) (\bar{z}_{1,r}-z^i_{1,r}) d s d r + E_3(t),  \nonumber \\
\label{tets4}
\ea
where $E_
0(t)$, $E_
1(t,s)$, $E_
2(t,s,r)$ and $E_
3(t)$ are some deterministic functions to be determined. It is assumed that $E_
2(t,s,r)=E_
2(t,r,s)$. 
%

\begin{remark}
 Note that the ansatz (\ref{ansatz}) depends on $z \in \mathbb{H}^N$ whose second component is the past of all banks' controls $\alpha$. In other words, the 
value function $V^i(t,z)$ is an explicit function of the past of all banks' controls $\alpha_{t-\tau-r}$, $r \in [-\tau,0]$.
\end{remark}


The derivatives of the ansatz (\ref{ansatz}) are as follows
\ba
  \partial_t V^i &=& \frac{d E_
0(t)}{d t}(\bar{z}_0-z_0^i)^2-
 2 (\bar{z}_0-z_0^i) \int\limits_{-\tau}^{0}\frac{\partial E_
1(t, -\tau-s)}{\partial t}(\bar{z}_{1,s}-z^i_{1,s}) d s  \nonumber \\
&&+ \int\limits_{-\tau}^{0} \int\limits_{-\tau}^{0} \frac{\partial E_
2(t, -\tau-s, -\tau-r)}{\partial t} 
(\bar{z}_{1,s}-z^i_{1,s}) (\bar{z}_{1,r}-z^i_{1,r}) d s d r  
+\frac{d E_
3(t)}{d t} ,   \nonumber \\
\\
\partial_{z^j} V^i &=& \begin{bmatrix}
        2 E_
0(t)(\bar{z}_0-z_0^i)-2\int\limits_{-\tau}^{0}E_
1(t,-\tau- s)(\bar{z}_{1,s}-z^i_{1,s}) d s \\
	-2 (\bar{z}_0-z_0^i) E_
1(t, s)+2\int\limits_{-\tau}^{0} E_
2(t,-\tau-s,-\tau-r) (\bar{z}_{1,r}-z^i_{1,r})d r
       \end{bmatrix}\left(\frac{1}{N}-\delta_{i,j}\right), \nonumber \\
\\
\partial_{z^j z^k} V^i &=& \begin{bmatrix}
        2 E_
0(t) & -2 E_
1(t,-\tau-s) \\
	-2 E_
1(t,-\tau-s) & 2E_
2(t,-\tau-s,-\tau-r)
       \end{bmatrix}\left(\frac{1}{N}-\delta_{i,j}\right)\left(\frac{1}{N}-\delta_{i,k}\right).
\ea
By plugging the ansatz (\ref{ansatz}) into the HJB equation (\ref{HJB}), and collecting all the corresponding terms, the following set of equations are derived
for $t \in [0,T]$ and $s, r \in [-\tau,0]$.\par
The equation corresponding to the \emph{constant} terms is
\ba
\label{eq1-1}
 \frac{d E_
3(t)}{d t}+ (1-\frac{1}{N})\sigma^2 E_
0(t) =0,
\ea

The equation corresponding to the $(\bar{z}_0 -z^i_0)^2$ terms is
\ba
\label{eq1-2}
 & \frac{d E_
0(t)}{d t} + \frac{\epsilon}{2}=2(1-\frac{1}{N^2})(E_
1(t,0)+E_
0(t))^2+2q(E_
1(t,0)+E_
0(t)) +\frac{q^2}{2}.
\ea
The equation corresponding to the $(\bar{z}_0 -z^i_0) (\bar{z}_1-z^i_1)$ terms is
\ba
\label{eq1-3}
& \frac{\partial E_
1(t, s)}{\partial t}-\frac{\partial E_
1(t, s)}{\partial s}=2(1-\frac{1}{N^2})
\left(E_
1(t,0)+E_
0(t)+\frac{q}{2(1-\frac{1}{N^2})}\right)\left(E_
2(t,s,0)+E_
1(t,s)\right). 
\ea

The equation corresponding to the $(\bar{z}_1-z^i_1)(\bar{z}_1-z^i_1)$ terms is
\ba
\label{eq1-4}
&  \frac{\partial E_
2(t, s, r)}{\partial t}-\frac{\partial E_
2(t, s, r)}{\partial s} -\frac{\partial E_
2(t, s, r)}{\partial r}= \nonumber  \\
& 2(1-\frac{1}{N^2})\left(E_
2(t,s,0)+E_
1(t,s)\right)\left(E_
2(t,r,0)+E_
1(t,r)\right).
\ea

The boundary conditions are
\ba\label{bc}
&& E_
0(T)=\frac{c}{2}, \nonumber \\
&& E_
1(T,s)=0, \nonumber \\ 
&& E_
2(T,s,r)=0,  \nonumber \\  
&& E_
2(t,s,r)=E_
2(t,r,s),   \\  
&& E_
1(t,-\tau)= -E_
0(t), \hspace{3mm} \forall t \in [0,T),  \nonumber \\
&& E_
2(t,s,-\tau)= -E_
1(t,s), \hspace{3mm} \forall t \in [0,T),  \nonumber \\   
&& E_
3(T)=0. \nonumber
\ea
Note that with these boundary conditions (at $t=T$), we have $V^i(T,z)=g_i(z_0)=\frac{c}{2}(\bar{z}_0-z_0^i)^2$, as desired.
\begin{remark}
\label{remarkE}
 The set of equations (\ref{eq1-1}--\ref{eq1-4}) on the domain $t \in [0,T]$, $s, r \in [-\tau,0]$, and with boundary conditions (\ref{bc})
admits a unique solution. This can be shown by following the
steps of the proof of Theorem 6 in \cite{alekal1971quadratic} and using a fixed point argument (see also \cite{L.Chen2011}).
\end{remark}
If all the other banks choose their optimal controls, then the bank $i$'s optimal strategy $\hat{\alpha}^{i}$,  $i=1,\cdots,N$ follows 

\ba
 \hat{\alpha}^{i}_t &&= -\langle B, \partial_{z^i} V^i \rangle - q (z^i_{0} -\bar{z}_{0}),  \nonumber \\
&& =2\left(1-\frac{1}{N}\right) \left[\left(E_
1(t,0)+E_
0(t)+\frac{q}{2\left(1-\frac{1}{N}\right)}\right)(\bar{z}_{0}-z^i_{0}) \right. \nonumber \\
&&\left.  - \int_{-\tau}^{0}\left(E_
2(t,-\tau-s,0)+E_
1(t,-\tau-s)\right)(\bar{z}_{1,s}-z^i_{1,s}) d s \right] . 
\ea
In terms of the original system of coupled diffusions (\ref{Xi-1}), the closed-loop Nash equilibrium corresponds to 
\ba\label{Nash}
 \hat{\alpha}^{i}_t &&=  2\left(1-\frac{1}{N}\right)\left[\left(E_
1(t,0)+E_
0(t)+\frac{q}{2\left(1-\frac{1}{N}\right)}\right)(\bar{X}_{t}-X^i_{t}) \right. \nonumber \\
&&\left.  + \int_{t-\tau}^{t}\left[E_
2(t,s-t,0)+E_
1(t,s-t)\right](\bar{\hat{\alpha}}_{s}-\hat{\alpha}^i_{s}) d s\right]  ,\quad 
 i=1,\cdots, N.   \nonumber \\
\ea

\begin{remark}
\label{re:clearing2}
As pointed out in Remark \ref{re:clearing} of Section \ref{Systemic-Risk-SDDE}, in the present situation we still have $\sum_{i=1}^N\hat\alpha^i_t=0$ and therefore, in this  equilibrium, the central bank serves as
a clearing house (see also the discussion of Section \ref{FI}).  
\end{remark}

\section{A Verification Theorem} \label{caratheodory}

 In this section, we provide a verification theorem establishing that the strategies given by (\ref{Nash}) correspond to a Nash equilibrium. Our solution is only \emph{almost explicit} because the equilibrium strategies are given by the solution of a system of integral equations.
This approach has been used by \cite{alekal1971quadratic} to find the optimal control in a deterministic delayed linear quadratic control problem.  Recently, \cite{L.Chen2011} and
 \cite{huang2012forward} have 
applied this approach to delayed linear quadratic stochastic control problems. In this section, we generalize it to delayed linear -quadratic stochastic games differential games. 

We recall that 
at time $t \in [0,T]$, given $x=(x^1,\cdots,x^N)$, which should be viewed as the state of the $N$ banks at time $t$, and an $A$-valued function $\alpha$ on $[0,\tau)$,
which should be viewed as their collective controls over the time interval $[t-\tau,t)$, bank $i$ chooses the strategy $\alpha^i$ to minimize its objective function
\ba
J^{i}(t,x,\alpha,\alpha^{t})=\EE\left\{\int_{t}^{T}f_i(X_s,\alpha^i_s)ds+g_i(X_T) \mid X_t=x, \alpha_{[t)}=\alpha\right\}.
\ea
Here $\alpha_{[t)}$ is defined as the restriction of the path $s\mapsto \alpha_s$ to the interval $[t-\tau,t)$
and $\alpha^t$ is an admissible control strategy for the $N$ banks over the time interval $[t,T]$. We denote by $\A^t$ this set of admissible strategies.

In the search for Nash equilibria, for each bank $i$, we assume that the banks $j\ne i$ chose their strategies $\alpha^{-i,t}$ for the \emph{future} $[t,T]$, in which case, bank $i$'s should choose a strategy $\alpha^{i,t}\in\A^{i,t}$ in order to try to minimize its objective function $J^i(t,x,\alpha,(\alpha^{i,t},\alpha^{-i,t}))$. As a result we define the value function $V^i(t,x,\alpha,\alpha^{-i,t})$ of bank $i$ by:  
\ba
V^i(t,x,\alpha,\alpha^{-i,t})=\inf_{\alpha^{i,t}\in\A^{i,t}} J^i(t,x,\alpha,(\alpha^{i,t},\alpha^{-i,t})).
\ea
Because of the linear nature of the dynamics of the states, together with the quadratic nature of the costs, we expect that in equilibrium, the functions $J^i$ and $V^i$ to be quadratic functions of the state $x$ and the past $\alpha$ of the control. This is consistent with the choices we made in the previous section. Accordingly, we write the functions $V^i$ as
\ba
V^i(t,x,\alpha) &=& E_0(t) (\bar{x}-x^i)^2+2 (\bar{x}-x^i) \int\limits_{t-\tau}^{t}E_1(t, s-t)(\bar{\alpha}_s-\alpha^i_s) d s \nonumber  \\
       &+& \int\limits_{t-\tau}^{t} \int\limits_{t-\tau}^{t} E_2(t, s-t, r-t) 
(\bar{\alpha}_s-\alpha^i_s) (\bar{\alpha}_r-\alpha^i_r) d s d r + E_3(t),
\label{tets3}
\ea
where the deterministic functions $E_i\,(i=0,\cdots,3)$, are the solutions of the system (\ref{eq1-1}--\ref{eq1-4}) with the boundary conditions (\ref{bc}). We dropped the dependence of $V^i$ upon its fourth parameter $\alpha^{-i,t}$ because the right hand side of \eqref{tets3} does not depend upon $\alpha^{-i,t}$.

\vskip 6pt
The main result of this section is Proposition \ref{prop1} below which says that any solution of the system \eqref{Nash} of integral equations provides a Nash equilibrium. For that reason, we first prove existence and uniqueness of solutions of these integral equations when they are recast as a fixed point problem in 
classical spaces of adapted processes. This is done in Lemma \ref{le:fixed_point} below. We simplify the notation and we rewrite equation \eqref{Nash} for the purpose of the proof of the lemma. We set:
$$
\varphi(t)=2\left(1-\frac{1}{N}\right)\left(E_
1(t,0)+E_
0(t)+\frac{q}{2\left(1-\frac{1}{N}\right)}\right)
$$
and 
$$
\bar\psi(t,s) = [E_2(t,s-t,0)+E_1(t,s-t)]{\bf 1}_{[t-\tau,t]}(s)
$$
so that equation \eqref{Nash} can be rewritten as:
\begin{equation}
\label{Nash2}
\begin{split}
\hat{\alpha}^{i}_t &=  \varphi(t)(\bar{X}_{t}-X^i_{t}) + \int_{0}^{t}\bar\psi(t,s)(\bar{\hat{\alpha}}_{s}-\hat{\alpha}^i_{s}) d s\\
&=  \varphi(t)\left((\bar\xi-\xi^i) - \int_{0}^{t}[(\bar{\hat\alpha}_s-\hat\alpha^i_s)-(\bar{\hat\alpha}_{s-\tau}-\hat\alpha^i_{s-\tau})]ds +\sigma[\bar W_t-W^i_t]\right)
+ \int_{0}^{t}\bar\psi(t,s)(\bar{\hat{\alpha}}_{s}-\hat{\alpha}^i_{s}) d s.
\end{split}
\end{equation}
Summing these equations for $i=1,\cdots,N$, we see that any solution should necessarily satisfy $\sum_{1\le i\le N}\hat\alpha^i=0$, so that if we look for a solution of the system \eqref{Nash}, we might as well restrict our search to processes satisfying $\bar{\hat\alpha}_t=0$ for all $t\in[0,T]$.

\vskip 2pt
So we denote by $\RR^N_0$ the set of elements $x=(x^1,\cdots,x^N)$ of $\RR^N$ satisfying $\sum_{1\le i\le N}x^i=0$, and by $\cH^{2,N}_0$ the space of $\RR^N_0$-valued adapted processes $a=(a_t)_{0\le t\le T}$ satisfying
$$
 \|a\|^2_0:=\EE\bigg[\int_0^T|a_t|^2dt\bigg]<\infty.
$$
Clearly, $\cH^{2,N}_0$ is a real separable Hilbert space for the scalar product derived from the norm $\|\,\cdot\,\|_0$ by polarization.
For $a\in\cH^{2,N}_0$ we define the $\RR^N_0$-valued process $\Psi(a)$ by:
\begin{equation}
\label{fo:Psi}
\Psi(a)^i_t= \varphi(t)(\bar\xi-\xi^i) +\sigma\varphi(t)[\bar W_t-W^i_t] +\int_0^t\psi(t,s)a^i_sds,
\qquad 0\le t\le T,\; i=1,\cdots,N.
\end{equation}
where the function $\psi$ is defined by $\psi(t,s)=1-{\bf 1}_{[0,0\vee(t-\tau)]}(s)-\bar\psi(t,s)$. We shall use the fact that the functions $\varphi$ and $\psi$ are bounded.

Given the above set-up, existence and uniqueness of a solution to \eqref{Nash} is given by the following lemma whose proof mimics the standard proofs of existence and uniqueness of solutions of stochastic differential equations.

\begin{lemma}
\label{le:fixed_point}
The map $\Psi$ defined by \eqref{fo:Psi} has a unique fixed point in $\cH^{2,N}_0$.
\end{lemma}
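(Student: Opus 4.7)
The plan is a standard Banach fixed-point argument, using a weighted norm on $\cH^{2,N}_0$ to ensure contraction on the whole interval $[0,T]$ without any smallness assumption on $T$.

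First I would verify that $\Psi$ maps $\cH^{2,N}_0$ into itself. Values are in $\RR^N_0$: the vectors $(\bar\xi-\xi^i)_i$ and $(\bar W_t-W^i_t)_i$ obviously sum to zero, and $\sum_i a^i_s=0$ is preserved under the linear operator $a\mapsto\int_0^t\psi(t,s)a^i_s\,ds$, so $\sum_i\Psi(a)^i_t=0$ for every $t$. The square-integrability $\EE\int_0^T|\Psi(a)_t|^2\,dt<\infty$ follows from boundedness of $\varphi$ and $\psi$ on $[0,T]^2$ (which in turn comes from the fact, invoked in Remark \ref{remarkE}, that the functions $E_0, E_1, E_2$ are classical solutions of the PDE system on $[0,T]\times[-\tau,0]^2$ and are therefore bounded), together with $\EE|W_t|^2\le T$ and Cauchy--Schwarz applied to the integral term.

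Next I would establish contractivity. Set $M:=\|\psi\|_\infty$ and, for $\lambda>0$, introduce the equivalent norm
\begin{equation*}
\|a\|_\lambda^2=\EE\int_0^Te^{-\lambda t}|a_t|^2\,dt,
\end{equation*}
which generates the same topology on $\cH^{2,N}_0$. For $a,b\in\cH^{2,N}_0$ the noise and initial-condition terms cancel, so
\begin{equation*}
\Psi(a)^i_t-\Psi(b)^i_t=\int_0^t\psi(t,s)\bigl(a^i_s-b^i_s\bigr)\,ds,
\end{equation*}
and Cauchy--Schwarz gives $|\Psi(a)^i_t-\Psi(b)^i_t|^2\le M^2 T\int_0^t|a^i_s-b^i_s|^2\,ds$. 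Summing in $i$, multiplying by $e^{-\lambda t}$, taking expectation, and applying Fubini,
\begin{equation*}
\|\Psi(a)-\Psi(b)\|_\lambda^2\le M^2 T\,\EE\int_0^T|a_s-b_s|^2\int_s^Te^{-\lambda t}\,dt\,ds\le\frac{M^2T}{\lambda}\|a-b\|_\lambda^2.
\end{equation*}
Choosing any $\lambda>M^2T$ makes $\Psi$ a strict contraction on $(\cH^{2,N}_0,\|\cdot\|_\lambda)$.

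Since $(\cH^{2,N}_0,\|\cdot\|_\lambda)$ is a complete metric space (being a closed subspace of the Hilbert space $\cH^{2,N}$ equipped with an equivalent norm), the Banach fixed-point theorem yields existence and uniqueness of a fixed point of $\Psi$, which is precisely the statement of the lemma. The main technical point, beyond routine verifications, is the boundedness of $\varphi$ and $\psi$; this is already noted in Remark \ref{remarkE}, so no new work is needed here. The rest of the argument is the textbook Picard trick adapted to a Volterra-type integral equation, and the weighted norm is what allows the conclusion for arbitrary $T$ rather than only for small horizons.
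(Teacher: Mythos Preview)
Your argument is correct and is precisely the ``standard proof of existence and uniqueness of solutions of stochastic differential equations'' that the paper invokes in lieu of a detailed proof: a Banach fixed-point argument on a Volterra-type map, with the Bielecki weighted norm to obtain a global contraction. The only quibble is cosmetic: boundedness of $\varphi$ and $\psi$ is asserted directly in the paragraph preceding the lemma rather than in Remark~\ref{remarkE}, and you might add a word confirming that $\Psi(a)$ is adapted (immediate from the form of \eqref{fo:Psi}).
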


%



\vskip 6pt
We now prove existence of Nash equilibria for the system.
\begin{prop}
\label{prop1}
The strategies $(\hat\alpha^i_t)_{0\le t\le T,\,i=1,\cdots,N}$ given by the solution of the system of integral equations (\ref{Nash})
form a Nash equilibrium, and the corresponding value functions are given by (\ref{tets3}). 
\end{prop}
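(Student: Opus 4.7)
The plan is a standard verification argument based on the functional Itô formula applied to the ansatz \eqref{tets3}, combined with the HJB system derived in Section \ref{sec:HJB} and the pointwise Hamiltonian minimization \eqref{zbest}. The existence of $(\hat\alpha^i_t)_{0\le t\le T,\,i=1,\cdots,N}$ as adapted square-integrable processes is already guaranteed by Lemma \ref{le:fixed_point}, so it only remains to check the Nash property and to identify $V^i$ with the value function.

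First I would fix $i\in\{1,\cdots,N\}$ and assume every other player uses their equilibrium strategy $\hat\alpha^j$ for $j\ne i$. Let $\beta\in\A^{i,0}$ be any admissible deviation for player $i$, and let $X^\beta$ denote the resulting state trajectory. Applying Itô's formula to the process $t\mapsto V^i(t,X^\beta_t,(\beta,\hat\alpha^{-i})_{[t)})$ using the ansatz \eqref{tets3}, the time derivative splits into $\partial_t V^i$, boundary terms arising from the shift of the past-window of controls at $s=t$ and $s=t-\tau$ (which cancel thanks to the boundary conditions $E_1(t,-\tau)=-E_0(t)$ and $E_2(t,s,-\tau)=-E_1(t,s)$ in \eqref{bc}), and the usual first- and second-order state-derivative terms coming from the SDE for $X^\beta$. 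By the HJB equation \eqref{HJB}, the resulting drift equals $\langle \partial_z V^i, B(\beta,\hat\alpha^{-i})\rangle - H^i_0(\partial_z V^i)$. Invoking the definition of $H^i_0$, the first-order condition \eqref{zbest}, and the convexity of $f_i$ in $\alpha^i$, the sum of this drift and $f_i(X^\beta_t,\beta_t)$ is pointwise nonnegative and vanishes iff $\beta_t=\hat\alpha^i_t$. Integrating from $0$ to $T$, taking expectations (the stochastic integral is a true martingale by the $L^2$-bound of Lemma \ref{le:fixed_point} and a standard localization), and using the terminal boundary conditions $E_0(T)=c/2$, $E_1(T,\cdot)=E_2(T,\cdot,\cdot)=E_3(T)=0$ so that $V^i(T,X^\beta_T,\cdot)=g_i(X^\beta_T)$, one obtains
$$V^i(0,\xi,0)\ \le\ J^i(0,\xi,0,(\beta,\hat\alpha^{-i})),$$
with equality when $\beta=\hat\alpha^i$. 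This simultaneously gives the Nash property and identifies $V^i$ with the value function of player $i$ in equilibrium.

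The main obstacle is the rigorous use of Itô's formula for $V^i$, which depends on the past $\alpha_{[t-\tau,t)}$ through integrals whose limits move with $t$: differentiation in $t$ produces boundary contributions at $s=t$ and $s=t-\tau$, and the entire cancellation rests on these being exactly matched by the boundary conditions in \eqref{bc}. In the infinite-dimensional picture this difficulty manifests as the unboundedness of the operator $B$ arising from the Dirac mass at $-\tau$, and a smoothing/regularization argument in the spirit of \cite{F.Gozzi2015} would be required for a fully rigorous passage to the limit. Once this technical accounting is handled, the remaining ingredients---convexity of $f_i$, the martingale property of the stochastic integral, and evaluation at the terminal time---are routine.
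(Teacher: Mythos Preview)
Your outline is correct in spirit and follows the same skeleton as the paper: apply It\^o's formula to $V^i(t,X_t,\alpha_{[t)})$ along a unilateral deviation, add $f_i$, integrate, and show the result is nonnegative with equality at $\hat\alpha^i$. The execution, however, differs in an important way.

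The paper deliberately \emph{avoids} invoking the HJB equation \eqref{HJB}. Section~\ref{sec:HJB} is declared heuristic precisely because the operator $B$ is unbounded; the raison d'\^etre of Section~\ref{caratheodory} is to give a self-contained argument that does not pass through the infinite-dimensional HJB. Accordingly, the paper expands It\^o's formula for the explicit quadratic functional \eqref{tets3} (this requires only classical It\^o for the $X$-dependence and Leibniz differentiation of integrals with moving endpoints for the $\alpha_{[t)}$-dependence), then substitutes the scalar PDEs \eqref{eq1-1}--\eqref{eq1-4} directly, not the abstract equation \eqref{HJB}. After algebraic rearrangement the drift plus running cost becomes a perfect square in $\alpha^i_t$ \emph{plus residual cross terms} that depend on $\sum_{j\ne i}\alpha^j_t$; these residuals are then shown to cancel by plugging in the feedback form of $\hat\alpha^j$ from \eqref{Nash}. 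Your step ``by the HJB equation the drift equals $\langle \partial_z V^i, B(\beta,\hat\alpha^{-i})\rangle - H^i_0(\partial_z V^i)$'' packages all of this into one line, but it hides exactly the cancellation of the $\hat\alpha^{-i}$ contributions, which is where the game structure enters: $H^i_0$ in \eqref{HJB} is built from $\partial_{z^k}V^k$ for \emph{all} $k$, and your inequality only follows once you have checked that the controls the other players actually use coincide with the Hamiltonian-minimizers evaluated at the \emph{deviated} state. The paper makes this verification explicit in the passage from \eqref{duadequ3} to \eqref{duadequ4}.

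In short, your argument is the abstract blueprint and would be fine in a setting where \eqref{HJB} is rigorously available; the paper's contribution is precisely to replace that blueprint with an explicit, finite-dimensional computation that sidesteps the unbounded $B$. If you want your proof to stand on its own you should either (i) carry out the smoothing you allude to, or (ii) mimic the paper and work directly with \eqref{eq1-1}--\eqref{eq1-4} and the boundary conditions \eqref{bc}, completing the square by hand and checking that the residual terms vanish once $\alpha^j=\hat\alpha^j$ for $j\ne i$.
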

In other words, we prove that
$$
V^i(0,\xi^i,\alpha_{[0)}) \leq J^i(0,\xi^i, \alpha_{[0)},(\alpha^i,\hat\alpha^{-i})),
$$
for any $\alpha^i$, and choosing $\alpha^i=\hat\alpha^i$ gives: 
$$
V^i(0,\xi^i,\alpha_{[0)}) = J^i(0,\xi^i, \alpha_{[0)},(\hat\alpha^i,\hat\alpha^{-i})).
$$
Notice that the equilibrium strategies which we identified are in feedback form in the sense that each $\hat\alpha^i_t$ is a deterministic function of the trajectory $X_{[0,t]}$ of the past of the state. Notice also that there is absolutely nothing special with the time $t=0$ and the initial condition $X_0=\xi, \alpha_{[0)}=0$. Indeed for any $t\in [0,T]$ and $\RR^N$-valued square integrable random variable $\zeta$, the same proof can be used to construct a Nash equilibrium for the game over the interval $[t,T]$ and any initial condition $(X_t=\zeta,\alpha_{[t)})$.


\begin{proof}
We fix an arbitrary $i\in\{1,\cdots,N\}$, an admissible control $\alpha^i\in\A^{-i}$ for player $i$, and we assume that the state process $(X_t)_{0\le t\le T}$ for the $N$ banks is controlled by $(\alpha^i_t,\hat\alpha^i_t)_{0\le t\le T}$ where $(\hat\alpha^k_t)_{0\le t\le T,\,k=1,\cdots,N}$ solves the system of integral equations (\ref{Nash}). Next, we apply It\^o's formula to $V^i(t,X_t,\alpha_{[t)})$ where the function $V^i$ is defined by (\ref{tets3}). We obtain

\ba
&& d V^i(t,X_t,\alpha_{[t)}) = \nonumber \\
&& \Bigg\{   \frac{d E_0(t)}{d t}(\bar{X}_t-X^i_t)^2 + 2 E_0(t)(\bar{X}_t-X^i_t)\left(\bar{\alpha}_t-\alpha^i_t-(\bar{\alpha}_{t-\tau}-\alpha^i_{t-\tau})\right)
    \nonumber \\
&&+\sum\limits_{j=1}^{N}\sigma^2 E_0(t)(\frac{1}{N}-\delta_{i,j})^2 + 2\left(\bar{\alpha}_t-\alpha^i_t-(\bar{\alpha}_{t-\tau}-\alpha^i_{t-\tau})\right) \int\limits_{t-\tau}^{t}E_1(t, s-t)(\bar{\alpha}_s-\alpha^i_s) d s    \nonumber \\
&&+ 2 (\bar{X}_t-X^i_t) \int\limits_{t-\tau}^{t}\left[\frac{\partial E_1(t, s-t)}{\partial t}-\frac{\partial E_1(t, s-t)}{\partial s}\right](\bar{\alpha}_s-\alpha^i_s) d s \nonumber \\
&&+ 2 (\bar{X}_t-X^i_t) E_1(t,0)(\bar{\alpha}_t-\alpha^i_t)- 2 (\bar{X}_t-X^i_t) E_1(t,-\tau)  (\bar{\alpha}_{t-\tau}-\alpha^i_{t-\tau})  \nonumber \\
&&+  \int\limits_{t-\tau}^{t} \int\limits_{t-\tau}^{t} \bigg[\frac{\partial E_2(t, s-t, r-t)}{\partial t}-\frac{\partial E_2(t, s-t, r-t)}{\partial s}  \nonumber \\
&&- \frac{\partial E_2(t, s-t, r-t)}{\partial r}\bigg] (\bar{\alpha}_s-\alpha^i_s) (\bar{\alpha}_r-\alpha^i_r) d s d r   \nonumber \\
&&+ (\bar{\alpha}_t-\alpha^i_t) \left(\int\limits_{t-\tau}^{t} E_2(t, s-t,0)(\bar{\alpha}_s-\alpha^i_s) d s + \int\limits_{t-\tau}^{t} E_2(t, 0,r-t)(\bar{\alpha}_r-\alpha^i_r) d r \right)  \nonumber \\
&&- (\bar{\alpha}_{t-\tau}-\alpha^i_{t-\tau}) \left(\int\limits_{t-\tau}^{t} E_2(t, s-t,-\tau)(\bar{\alpha}_s-\alpha^i_s) d s + 
\int\limits_{t-\tau}^{t} E_2(t,-\tau,r-t)(\bar{\alpha}_r-\alpha^i_r) d r \right)  \nonumber \\
&&+ \frac{d E_3(t)}{d t} \Bigg\} dt   \nonumber \\
&&+ \sum\limits_{j=1}^{N}\Bigg\{
+2 E_0(t)(\bar{X}_t-X^i_t)(\frac{1}{N}-\delta_{i,j}) + 2 (\frac{1}{N}-\delta_{i,j})\int\limits_{t-\tau}^{t} E_1(t, s-t)(\bar{\alpha}_s-\alpha^i_s) d s \Bigg\} \sigma  d W^j_t.\nonumber\\
\ea
Then, integrating between $0$ and $T$, using $V^i(T,X_T)=g_i(X_T)$ (ensured by the boundary conditions at $t=T$ for $E_k, \,k=0,1,2,3$), taking expectation,  using the differential equations (\ref{eq1-1}-\ref{eq1-4}), using the short notation $A_1=1-\frac{1}{N},\, A_2=1-\frac{1}{N^2}$,
and adding $\mathbb{E} \int\limits_{0}^{T}f_i(X_s,\alpha^i_s) d t $ on both sides, one obtains:
\ba
\label{duadequ}
&& - V^i(0,\xi^i,\alpha_{[0)}) + \mathbb{E}(g_i(X_T))+   \mathbb{E} \int\limits_{0}^{T}f_i(X_s,\alpha^i_s) d t =   - V^i(0,\xi^i,\alpha_{[0)}) +J^i(0,\xi^i, \alpha_{[0)},\alpha) =  \nonumber \\
&& \mathbb{E} \int\limits_{0}^{T}\Bigg\{ 
\left[-\frac{\epsilon}{2}+2A_2(E_1(t,0)+E_0(t))^2+2q(E_1(t,0)+E_0(t))+\frac{q^2}{2}\right](\bar{X}_t-X^i_t)^2\nonumber\\
&&+2E_0(t)(\bar{X}_t-X^i_t)\left((\bar{\alpha}_t-\alpha^i_t)-(\bar{\alpha}_{t-\tau}-\alpha^i_{t-\tau})\right)
+\sigma^2E_0(t)\sum_{j=1}^N(\frac{1}{N}-\delta_{i,j})^2\nonumber\\
&&+ 2\left(\bar{\alpha}_t-\alpha^i_t-(\bar{\alpha}_{t-\tau}-\alpha^i_{t-\tau})\right) \int\limits_{t-\tau}^{t}E_1(t, s-t)(\bar{\alpha}_s-\alpha^i_s) d s    \nonumber \\
&&+ 2 (\bar{X}_t-X^i_t) \int\limits_{t-\tau}^{t}\left[2A_2
\left(E_1(t,0)+E_0(t)+\frac{q}{2A_2}\right)\left(E_2(t,s-t,0)+E_1(t,s-t)\right)\right](\bar{\alpha}_s-\alpha^i_s) d s \nonumber \\
&&+ 2 (\bar{X}_t-X^i_t) E_1(t,0)(\bar{\alpha}_t-\alpha^i_t)- 2 (\bar{X}_t-X^i_t) E_1(t,-\tau)  (\bar{\alpha}_{t-\tau}-\alpha^i_{t-\tau})  \nonumber \\
&&\hskip -1cm +  \int\limits_{t-\tau}^{t} \int\limits_{t-\tau}^{t} \bigg[2A_2\left(E_2(t,s-t,0)+E_1(t,s-t) \right) \left(E_2(t,r-t,0)+E_1(t,r-t) \right)\bigg] (\bar{\alpha}_s-\alpha^i_s) (\bar{\alpha}_r-\alpha^i_r) d s d r   \nonumber \\
&&+ (\bar{\alpha}_t-\alpha^i_t) \left(\int\limits_{t-\tau}^{t} E_2(t, s-t,0)(\bar{\alpha}_s-\alpha^i_s) d s + \int\limits_{t-\tau}^{t} E_2(t, 0,r-t)(\bar{\alpha}_r-\alpha^i_r) d r \right)  \nonumber \\
&&- (\bar{\alpha}_{t-\tau}-\alpha^i_{t-\tau}) \left(\int\limits_{t-\tau}^{t} E_2(t, s-t,-\tau)(\bar{\alpha}_s-\alpha^i_s) d s + 
\int\limits_{t-\tau}^{t} E_2(t,-\tau,r-t)(\bar{\alpha}_r-\alpha^i_r) d r \right)  \nonumber \\
&&-A_1\sigma^2E_0(t) +\frac{1}{2}(\alpha_t^i)^2-q\alpha_t^i(\bar{X}_t-X^i_t)+\frac{\epsilon}{2}(\bar{X}_t-X^i_t)^2
\Bigg\} dt .
\ea
Observe that the terms in $\epsilon$ cancel, the terms in $\sigma^2$ cancel, and the terms involving delayed controls cancel using symmetries and boundary conditions (\ref{bc}) for the functions $E_k$'s. 

Next, motivated by (\ref{Nash}), we rearrange the terms  left in (\ref{duadequ}) so that the square of 
\[
 {\alpha}^i_t-2A_1\left[\left(E_1(t,0)+E_0(t)+\frac{q}{2A_1}\right)(\bar{X}_{t}-X^i_{t}) +\int_{t-\tau}^{t}\left[E_2(t,s-t,0)+E_1(t,s-t)\right](\bar{\hat{\alpha}}_{s}-\hat{\alpha}^i_{s}) d s \right]
 \]
appears first. We obtain
\ba
\label{duadequ2}
&&  - V^i(0,\xi^i,\alpha_{[0)})+J^i(0,\xi^i, \alpha_{[0)},\alpha) =  \nonumber \\
&&  \mathbb{E} \int\limits_{0}^{T}\Bigg\{ 
\frac{1}{2}\left({\alpha}^i_t-2A_1\left[\left(E_1(t,0)+E_0(t)+\frac{q}{2A_1}\right)(\bar{X}_{t}-X^i_{t}) \right.\right.
\nonumber \\
&&\hskip 4cm \left.\left. +\int_{t-\tau}^{t}\left[E_2(t,s-t,0)+E_1(t,s-t)\right](\bar{\hat{\alpha}}_{s}-\hat{\alpha}^i_{s}) d s \right]\right)^2\nonumber\\
&&+(\bar{X}_t-X^i_t)^2\left[-2[A_1(E_1(t,0)+E_0(t)+\frac{q}{2}]^2+2A_2(E_1(t,0)+E_0(t))^2+2q(E_1(t,0)+E_0(t))+\frac{q^2}{2}\right]\nonumber\\
&&+(\bar{X}_t-X^i_t)\left[2\alpha_t^i[A_1(E_1(t,0)+E_0(t)]+2(E_1(t,0)+E_0(t))(\bar\alpha_t-\alpha_t^i)\right]\nonumber\\
&&+(\bar{X}_t-X^i_t)\left(\int_{t-\tau}^t (E_2(t,s-t,0)+E_1(t,s-t)(\bar\alpha_s-\alpha_s^i)ds\right)\left[
-4A_1\left(A_1(E_1(t,0)+E_0(t)+\frac{q}{2}\right)\right.\nonumber\\
&&\hskip 9cm\left.+4A_2\left(E_1(t,0)+E_0(t)+\frac{q}{2A_2}\right)\right]\nonumber\\
&&+\left(\int_{t-\tau}^t (E_2(t,s-t,0)+E_1(t,s-t)(\bar\alpha_s-\alpha_s^i)ds\right)\left[
2A_1\alpha_t^i+2(\bar\alpha_t-\alpha_t^i)\right]\nonumber\\
&&+\left(\int_{t-\tau}^t (E_2(t,s-t,0)+E_1(t,s-t)(\bar\alpha_s-\alpha_s^i)ds\right)^2\left[
-2A_1^2+2A_2\right]
\Bigg\} dt .
\ea
Using $A_2=A_1^2+\frac{2}{N}A_1$ and the relation $\bar\alpha_t-\alpha_t^i=\frac{1}{N}\sum\limits_{j\neq i}\alpha^j_t-A_1\alpha_t^i$, we simplify (\ref{duadequ2}) to obtain:
\ba
\label{duadequ3}
&&  - V^i(0,\xi^i,\alpha_{[0)}) +J^i(0,\xi^i, \alpha_{[0)},\alpha) =  \nonumber \\
&& \mathbb{E} \int\limits_{0}^{T}\Bigg\{ 
\frac{1}{2}\left({\alpha}^i_t-2A_1\left[\left(E_1(t,0)+E_0(t)+\frac{q}{2A_1}\right)(\bar{X}_{t}-X^i_{t}) 
\right.\right.
\nonumber \\
&&\hskip 4cm \left.\left.
+\int_{t-\tau}^{t}\left[E_2(t,s-t,0)+E_1(t,s-t)\right](\bar{\hat{\alpha}}_{s}-\hat{\alpha}^i_{s}) d s \right]\right)^2\nonumber\\
&&+(\bar{X}_t-X^i_t)^2\left[\frac{4}{N}A_1(E_1(t,0)+E_0(t))^2+\frac{2q}{N}(E_1(t,0)+E_0(t))
\right]\nonumber\\
&&+(\bar{X}_t-X^i_t)\left[\frac{2}{N}\sum\limits_{j\neq i}\alpha^j_t(E_1(t,0)+E_0(t))\right]\nonumber\\
&&+(\bar{X}_t-X^i_t)\left(\int_{t-\tau}^t (E_2(t,s-t,0)+E_1(t,s-t)(\bar\alpha_s-\alpha_s^i)ds\right)\left[
\frac{8}{N}A_1(E_1(t,0)+E_0(t))+\frac{2q}{N}\right]\nonumber\\
&&+\left(\int_{t-\tau}^t (E_2(t,s-t,0)+E_1(t,s-t)(\bar\alpha_s-\alpha_s^i)ds\right)\left[
\frac{2}{N}\sum\limits_{j\neq i}\alpha^j_t\right]\nonumber\\
&&+\left(\int_{t-\tau}^t (E_2(t,s-t,0)+E_1(t,s-t)(\bar\alpha_s-\alpha_s^i)ds\right)^2\left[
\frac{4}{N}A_1\right]
\Bigg\} dt .
\ea
Now, assuming that the players $j\neq i$ are using the strategies $\hat\alpha^j_t$ given by (\ref{Nash}), the quantity $\sum\limits_{j\neq i}\alpha^j_t$ becomes 
\ba
&&\sum_{j\neq i}\hat\alpha_t^j=-2A_1\left[\left(E_1(t,0)+E_0(t)+\frac{q}{2A_1}\right)(\bar{X}_{t}-X^i_{t}) 
\right.
\nonumber \\
&&\hskip 4cm \left.
+\int_{t-\tau}^{t}\left[E_2(t,s-t,0)+E_1(t,s-t)\right](\bar{\hat{\alpha}}_{s}-\hat{\alpha}^i_{s}) d s \right].\nonumber
\ea
Plugging this last expression in (\ref{duadequ3}), one sees that the terms after the square cancel and we get
\ba
\label{duadequ4}
&& - V^i(0,\xi^i,\alpha_{[0)})+J^i(0,\xi^i, \alpha_{[0)},(\alpha^i,\hat\alpha^{-i})) =  \nonumber \\
&&\mathbb{E} \int\limits_{0}^{T}\Bigg\{ 
\frac{1}{2}\left({\alpha}^i_t-2A_1\left[\left(E_1(t,0)+E_0(t)+\frac{q}{2A_1}\right)(\bar{X}_{t}-X^i_{t})
\right.\right.
\nonumber \\
&&\hskip 4cm \left.\left.
 +\int_{t-\tau}^{t}\left[E_2(t,s-t,0)+E_1(t,s-t)\right](\bar{\hat{\alpha}}_{s}-\hat{\alpha}^i_{s}) d s \right]\right)^2\Bigg\} dt .\nonumber\\
\ea
Consequently $V^i(0,\xi^i,\alpha_{[0)}) \leq J^i(0,\xi^i, \alpha_{[0)},(\alpha^i,\hat\alpha^{-i}))$, and choosing $\alpha^i=\hat\alpha^i$ leads to
\\
  $V^i(0,\xi^i,\alpha_{[0)})= J^i(0,\xi^i, \alpha_{[0)},(\hat\alpha^i,\hat\alpha^{-i}))$.
\end{proof}

\section{Financial Implications and Numerical Illustration}\label{FI}

The main finding is that taking into account repayment with delay  does not change the fact that the central bank providing liquidity is acting as a {\it clearing house} in all the Nash equilibria we identified (open-loop in Section \ref{Systemic-Risk-SDDE} or closed-loop in Sections  \ref{sec:HJB} and  \ref{caratheodory}).

The delay time, that is the single repayment maturity $\tau$ that we considered in this paper, controls the liquidity provided by borrowing and lending. The two extreme case are:
\begin{enumerate}
\item
No borrowing/lending: $\tau=0$:

In that case, no liquidity is provided and the log-reserves $X^i_t$ follow independent Brownian motions.
\item
No repayment: $\tau\geq T$:

This is the case studied previously in \cite{R.Carmona2013} and summarized in Section \ref{sec:nodelay}.
The rate of liquidity (the speed at which money is flowing through the system) is given by $\left[q+(1-\frac{1}{N})\phi_t\right]$ as shown in equation (\ref{ol-dXi2}).

\item
Intermediate regime $0<\tau<T$:

We conjecture that the rate of liquidity is monotone in $\tau$. For instance, in the case of the close-loop equilibrium obtained in Section \ref{caratheodory} given by (\ref{Nash}), the rate of liquidity is $\left[2E_1(t,0)+2E_0(t)+q\right]$ where the function $E_1$ and $E_0$ are  solutions to the system (\ref{eq1-1}--\ref{eq1-3}). These solutions are not given by closed form formulas. We computed them numerically. We show in Figure \ref{liquidity} that as expected, liquidity increases as $\tau$ increases. This is clear for values of $\tau$ which are small relative to the time horizon $T$. For values of $\tau$ which are large and comparable with $T$, the boundary effect becomes more important as oscillations propagate backward.
\end{enumerate}

\begin{figure}[H]
\includegraphics[width=15cm,height=8cm]{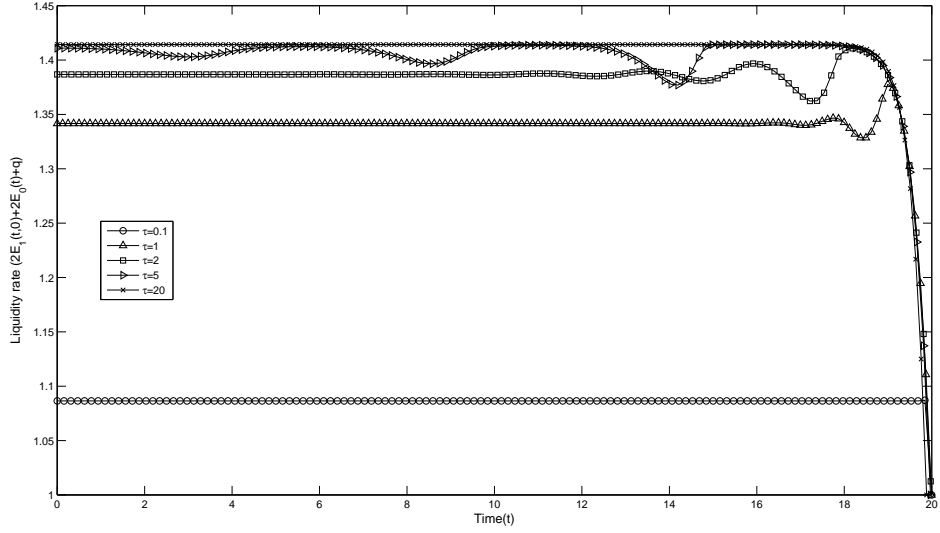}
\caption{Liquidity as a function of the delay time $\tau$. The parameters are $T=20$, $q=1$, $\eps=2$, and $c=0$.}
\label{liquidity}
\end{figure}

\section*{Acknowledgement}
The authors would like to thank  Romuald Elie and Phillip Yam for conversations on this subject.

\appendix

\section{Proof of Lemma \ref{thm-2}}\label{FABSDEproof}

\begin{proof}
Assuming that $(\check{X},\check{Y}, (\check{Z}^{k})_{k=1,\cdots,N})$ is given as an input, we solve the system \eqref{fo:Peng_Wu_system} for $\lambda=\lambda_0$ and the processes $\phi_t$, $\psi^k_t$, $r_t$ and  the random variable $\zeta$ replaced according to the prescriptions:
\ban
\phi_t&\leftarrow&\phi_t +\kappa\bigl[ \check Y_t- <\widetilde{\check{Y}}_{[t]} + q\check{X}_{[t]},\theta>\bigr]
\\
\psi^{k}_t&\leftarrow&\psi^{k}_t + \kappa\bigl[\check Z^{k}_t+\sigma(\frac{1}{N}-\delta_{i,k})\bigr],\quad k=1,\cdots,N\\
r_t&\leftarrow&r_t + \kappa\bigl[\check{X}_t+\bigl(1-\frac{1}{N}\bigr)\bigl[q\widetilde{\check{Y}}_t+ \left(q^2-\epsilon\right)\check{X}_t\bigr]\bigr]\\
\zeta&\leftarrow&\zeta + \kappa\bigl[-\check{X}_T+ c(1-\frac{1}{N})\check{X}_T\bigr],
\ean
and denote the solution by $({X},{Y}, ({Z}^{k})_{k=1,\cdots,N})$.
In this way, we defined a mapping 
$$
\Phi:(\check{X},\check{Y}, (\check{Z}^{k})_{k=1,\cdots,N})\rightarrow \Phi(\check{X},\check{Y}, (\check{Z}^{k})_{k=1,\cdots,N})=({X},{Y}, ({Z}^{k})_{k=1,\cdots,N}),
$$ 
and the proof consists in proving that the latter is a contraction for small enough $\kappa>0$.

Consider $(\widehat{X},\widehat{Y},(\widehat{Z}^{k})_{k=1,\cdots,N})=({X}-{X}^\prime,{Y}-{Y}^\prime, (Z^{k}-{Z}^{k\prime})_{k=1,\cdots,N})$ where $(X,Y,(Z^{k})_{k=1,\cdots,N})$ and $({X}^\prime,Y^\prime,({Z^{k}}^\prime)_{k=1,\cdots,N})$ are the corresponding image using inputs $(\check{X},\check{Y},(\check{Z}^{k})_{k=1,\cdots,N})$ and $({\check{X}^{\prime}},{\check{Y}^{\prime}},(\check{Z}^{k\prime})_{k=1,\cdots,N})$. 
We obtain  
\ba
\nonumber d \widehat{X}_t&=&\bigl[-(1-\lambda_0)\widehat{Y}_t
-\lambda_0<\widetilde{\widehat{Y}}_{[t]}+q\widehat{X}_{[t]},\theta> + \kappa\bigl[\widehat{\check Y_t}-<\widetilde{\widehat{\check{Y}}}_{[t]}+q\widehat{\check{X}}_{[t]},\theta>\bigr]\bigr]dt
\\
\nonumber &&+\sum_{k=1}^N[-(1-\lambda_0) \widehat Z^{k}_t +\kappa\widehat{\check Z}^{k}_t\bigr]dW^k_t\\
\nonumber d\widehat{Y}_t&=&\bigl[-(1-\lambda_0)\widehat{X}_t+\lambda_0\bigl(1-\frac{1}{N}\bigr)\bigl[q\widetilde{\widehat{Y}}_t+ (q^2-\epsilon)\widehat{X}_t\bigr] 
+\kappa\bigl[\widehat{\check{X}}_t + \bigl(1-\frac{1}{N}\bigr)\bigl[q\widetilde{\widehat{\check{Y}}}_t+ (q^2-\epsilon)\widehat{\check{X}}_t\bigr]\bigr]\bigr]dt\\
&&+\sum_{k=1}^N\widehat Z^{k}_tdW^k_t,
\ea
with initial condition $\widehat{X}_0=0$ and terminal conditions $ \widehat{Y}_T=(1-\lambda_0)\widehat{X}_T+\lambda_0c\bigl(1-\frac{1}{N}\bigr)\widehat{X}_T-\kappa\widehat{\check{X}}_T+\kappa c(1-\frac{1}{N})\widehat{\check{X}}_T$ and $\widehat{Y}_t=0$ for $t\in(T,T+\tau]$ in the case of $c >0$, and $\widehat Y_T=0$ and $\widehat{Y}_t=0$ for $t\in(T,T+\tau]$ in the case of $c=0$. As we stated in the text, we only give the proof in the case $c=0$ to simplify the notation. The proof of the case $c>0$ is a easy modification.
Using the form of the terminal condition and It$\hat{\mathrm{o}}$'s formula, we get

\ba
\nonumber &&0=\EE[\widehat{Y} _T\widehat{X} _T]\\
\nonumber &&=\EE\int_0^T\bigg\{
\widehat{Y}_t
\bigg[-(1-\lambda_0)\widehat{Y}_t
-\lambda_0<\widetilde{\widehat{Y}}_{[t]}+q\widehat{X}_{[t]},\theta> + \kappa\bigl[\widehat{\check Y_t}-<\widetilde{\widehat{\check{Y}}}_{[t]}+q\widehat{\check{X}}_{[t]},\theta>\bigr]
\bigg]\\
\nonumber &&\hskip25pt+ \widehat{X}_t
\bigg[
-(1-\lambda_0)\widehat{X}_t+\lambda_0\bigl(1-\frac{1}{N}\bigr)\bigl[q\widetilde{\widehat{Y}}_t+ (q^2-\epsilon)\widehat{X}_t\bigr]
+\kappa\bigl[\widehat{\check{X}}_t + \bigl(1-\frac{1}{N}\bigr)\bigl[q\widetilde{\widehat{\check{Y}}}_t+ (q^2-\epsilon)\widehat{\check{X}}_t\bigr]\bigr] 
\bigg]\\
&&\hskip45pt-(1-\lambda_0)\sum_{k=1}^N|\widehat Z^{k}_t|^2+\kappa\sum_{k=1}^N\widehat Z^{k}_t\widehat{\check Z}^{k}_t
\bigg\}dt\\
\nonumber &&=-(1-\lambda_0)\EE\int_0^T|\widehat{Y}_t|^2dt
-\lambda_0\EE\int_0^T\widehat{Y}_t<\widetilde{\widehat{Y}}_{[t]}+q\widehat{X}_{[t]},\theta>dt
+ \kappa\EE\int_0^T\widehat{Y}_t  \bigl[\widehat{\check Y_t}-<\widetilde{\widehat{\check{Y}}}_{[t]}+q\widehat{\check{X}}_{[t]},\theta>\bigr]dt\\
\nonumber &&\hskip 75pt-(1-\lambda_0)\EE\int_0^T|\widehat{X}_t|^2dt
+\lambda_0\bigl(1-\frac{1}{N}\bigr)\EE\int_0^T\widehat{X}_t\bigl[q\widetilde{\widehat{Y}}_t+ (q^2-\epsilon)\widehat{X}_t\bigr]dt\\
\nonumber &&\hskip 100pt+\kappa\EE\int_0^T\widehat{X}_t\bigl[\widehat{\check{X}}_t + \bigl(1-\frac{1}{N}\bigr)\bigl[q\widetilde{\widehat{\check{Y}}}_t+ (q^2-\epsilon)\widehat{\check{X}}_t\bigr]
\bigr]dt\\
&&\hskip 100pt-(1-\lambda_0)\EE\int_0^T \sum_{k=1}^N|\widehat Z^{k}_t|^2dt+\kappa\sum_{k=1}^N\widehat Z^{k}_t\widehat{\check Z}^{k}_t
 dt
\ea
and rearranging the terms we find:
\ba
\nonumber &&(1-\lambda_0)\bigl[\EE\int_0^T|\widehat{X}_t|^2dt + \EE\int_0^T|\widehat{Y}_t|^2 dt + \EE\int_0^T\sum_{k=1}^N|\widehat Z^{k}_t|^2\;dt\bigr]\\
\nonumber&&=\kappa\EE\int_0^T \widehat{X}_t\widehat{\check{X}}_t dt
-\lambda_0\EE\int_0^T\widehat{Y}_t<\widetilde{\widehat{Y}}_{[t]}+q\widehat{X}_{[t]},\theta>dt
+ \kappa\EE\int_0^T\widehat{Y}_t \bigl[\widehat{\check Y_t}-<\widetilde{\widehat{\check{Y}}}_{[t]}+q\widehat{\check{X}}_{[t]},\theta>\bigr]dt\\
\nonumber &&\hskip 75pt
+\lambda_0\bigl(1-\frac{1}{N}\bigr)\EE\int_0^T\widehat{X}_t\bigl[q\widetilde{\widehat{Y}}_t+ (q^2-\epsilon)\widehat{X}_t\bigr]dt\\
\nonumber &&\hskip 100pt+\kappa \bigl(1-\frac{1}{N}\bigr)\EE\int_0^T\widehat{X}_t\bigl[q\widetilde{\widehat{\check{Y}}}_t+ (q^2-\epsilon)\widehat{\check{X}}_t\bigr]
\bigr]dt+\kappa\EE\int_0^T\sum_{k=1}^N\widehat Z^{k}_t\widehat{\check Z}^{k}_t \; dt
\ea
Letting 
$\mu= \epsilon(1-\frac{1}{N})-q^2(1-\frac{1}{2N})^2>0$,  we obtain:
\ba
\nonumber &&(1-\lambda_0+\lambda_0\mu)\EE\int_0^T|\widehat{X}_t|^2dt
+(1-\lambda_0)\EE\int_0^T|\widehat Y_t|^2dt+(1-\lambda_0)\EE\int_0^T\sum_{k=1}^N|\widehat Z^{k}_t|^2dt\\
\nonumber&&\le  
\kappa\EE\int_0^T\widehat{Y}_t \bigl[\widehat{\check Y_t}-<\widetilde{\widehat{\check{Y}}}_{[t]}+q\widehat{\check{X}}_{[t]},\theta>\bigr]dt\\
\nonumber &&\hskip 15pt
+\kappa\bigl(1-\frac{1}{N}\bigr)\EE\int_0^T\bigg(\left(q^2-\epsilon\right)\widehat{\check{X}}_t+q\widetilde{\widehat{\check{Y}}}_t\bigr)\widehat{X}_tdt
\label{system-main}
+\kappa\EE\int_0^T\sum_{k=1}^N\widehat{Z}^{k}_t\widehat{\check{Z}}^{k}_tdt,
\ea
and a straightforward computation using repeatedly  Cauchy--Schwarz and Jensen's inequalities leads to the existence of a positive constant $K_1$ such that
\ba
\nonumber && (1-\lambda_0+\lambda_0\mu)\EE\int_0^T|\widehat{X}_t|^2dt
+(1-\lambda_0)\EE\int_0^T|\widehat Y_t|^2dt+(1-\lambda_0)\EE\int_0^T\sum_{k=1}^N|\widehat Z^{k}_t|^2dt\\
\nonumber &&\hskip 25pt\le\kappa K_1\bigg\{ \EE\int_0^T|\widehat{X}_t|^2dt+\EE\int_0^T|\widehat{Y}_t|^2dt+\EE\int_0^T\sum_{k=1}^N|\widehat{Z}^{k}_t|^2dt\\
\nonumber&&\hskip 45pt+ \EE\int_0^T|\widehat{\check{X}}_t|^2dt+\EE\int_0^T|\widehat{\check{Y}}_{t}|^2dt+\EE\int_0^T\sum_{k=1}^N|\widehat{\check{Z}}^{k}_t|^2dt\bigg\}.
\ea    
Referring to \cite{Bensoussan2015}, applying It$\hat{\mathrm{o}}$'s formula to  $|\widehat X_t|^2$ and $|\widehat Y_t|^2$,  Gronwall's inequality, and again Cauchy-Schwarz and Jensen's inequalities,  owing to $0\leq\lambda_0\leq 1$, we obtain a constant $K_2>0$ independent of $\lambda_0$ so that 
\be\label{cond-X}\nonumber
\sup_{0\leq t\leq T}\EE|\widehat X_t|^2\leq\kappa K_2 \left\{ \EE\int_0^T|\widehat{\check{X}}_t|^2+|\widehat{\check{Y}}_t|^2+\sum_{k=1}^N|\widehat{\check{Z}}^{k}_t|^2dt\right\}+K_2\left\{\EE\int_0^T|\widehat{Y}_t|^2+\sum_{k=1}^N|\widehat{Z}^{k}_t|^2dt\right\},
\en
\be\label{cond-int-X}\nonumber
\EE\int_0^T|\widehat{X}_t|^2dt\leq \kappa K_2T\left\{ \EE\int_0^T|\widehat{\check{X}}_t|^2+|\widehat{\check{Y}}_t|^2+\sum_{k=1}^N|\widehat{\check{Z}}^{k}_t|^2dt\right\}+K_2T\left\{\EE\int_0^T|\widehat{Y}_t|^2+\sum_{k=1}^N|\widehat{Z}^{k}_t|^2dt\right\},
\en
\ba\label{cond-BSDE}
\EE\int_0^T|\widehat{Y}_t|^2+\sum_{k=1}^N|\widehat{Z}^{k}_t|^2dt&\leq&\kappa K_2\left\{ \EE\int_0^T|\widehat{\check{X}}_t|^2+|\widehat{\check{Y}}_t|^2+\sum_{k=1}^N|\widehat{\check{Z}}^{k}_t|^2dt\right\} \nonumber \\
&&+ K_2\EE\int_0^T|\widehat{{X}}_t|^2dt . 
\ea
By using \eqref{cond-BSDE}, there exists $0<\mu'<\mu/{K_2}$ such that 
\ba
&&\lambda_0\mu'K_2\EE\int_0^T|\widehat{{X}}_t|^2dt \nonumber\\
&&\hskip 25pt
\geq\lambda_0\mu'\left(\EE\int_0^T|\widehat{Y}_t|^2+\sum_{k=1}^N|\widehat{Z}^{k}_t|^2dt\right)
-\lambda_0\mu' \kappa K_2\left\{ \EE\int_0^T|\widehat{\check{X}}_t|^2+|\widehat{\check{Y}}_t|^2+\sum_{k=1}^N|\widehat{\check{Z}}^{k}_t|^2dt\right\}\nonumber \\
&&\hskip 25pt
\geq\lambda_0\mu'\left(\EE\int_0^T|\widehat{Y}_t|^2+\sum_{k=1}^N|\widehat{Z}^{k}_t|^2dt\right)
-\mu' \kappa K_2\left\{ \EE\int_0^T|\widehat{\check{X}}_t|^2+|\widehat{\check{Y}}_t|^2+\sum_{k=1}^N|\widehat{\check{Z}}^{k}_t|^2dt\right\}\nonumber \\
\ea
Therefore, we have
\ba
\nonumber && \bigg(1-\lambda_0+\lambda_0(\mu-K_2\mu')\bigg)\EE\int_0^T|\widehat{X}_t|^2dt\\
\nonumber &&+(1-\lambda_0+\lambda_0\mu')\EE\int_0^T|\widehat Y_t|^2dt+(1-\lambda_0+\lambda_0\mu')\EE\int_0^T\sum_{k=1}^N|\widehat Z^{k}_t|^2dt\\
\nonumber &\leq&\kappa K_1\bigg\{ \EE\int_0^T|\widehat{X}_t|^2dt+\EE\int_0^T|\widehat{Y}_t|^2dt+\EE\int_0^T\sum_{k=1}^N|\widehat{Z}^{k}_t|^2dt\\
\nonumber&&\quad\quad +\EE\int_0^T|\widehat{\check{X}}_t|^2dt+\EE\int_0^T|\widehat{\check{Y}}_{t}|^2dt+\EE\int_0^T\sum_{k=1}^N|\widehat{\check{Z}}^{k}_t|^2\bigg\}\\
\nonumber && +\kappa K_2\mu'\left\{ \EE\int_0^T|\widehat{\check{X}}_t|^2dt+\EE\int_0^T|\widehat{\check{Y}}_t|^2dt+\EE\int_0^T\sum_{k=1}^N|\widehat{\check{Z}}^{k}_t|^2dt\right\} .\label{cond-main}\\
\ea
Note that since $\mu-K_2\mu'$ and $\mu'$ stay in positive, we have  $(1-\lambda_0+\lambda_0(\mu-K_2\mu'))\geq \mu''$ and $(1-\lambda_0+\lambda_0\mu')\geq \mu''$ where for some $\mu''>0$.
Combining the inequalities (\ref{cond-X}-\ref{cond-main}), we obtain 
\ba
&&\nonumber \EE\int_0^T|\widehat{X}_t|^2dt+\EE\int_0^T|\widehat{Y}_t|^2dt+\EE\int_0^T\sum_{k=1}^N|\widehat{Z}^{k}_t|^2dt\nonumber\\
&&\hskip 75pt\le \kappa K \left( \EE\int_0^T|\widehat{\check{X}}_t|^2dt+\EE\int_0^T|\widehat{\check{Y}}_{t}|^2dt+\EE\int_0^T\sum_{k=1}^N|\widehat{\check{Z}}^{k}_t|^2dt\right),\nonumber \\
\ea
where the constant $K$ depends upon $\mu'$, $\mu''$, $K_1$, $K_2$, and $T$. 
Hence, $\Phi$ is a strict contraction for sufficiently small $\kappa$.  
\end{proof}

\bibliographystyle{plainnat}
\bibliography{references-delay-games}

\end{document}